\newcolumntype{R}[2]{%
    >{\adjustbox{angle=#1,lap=\width-(#2)}\bgroup}%
    l%
    <{\egroup}%
}
\tikzset{
    -Latex,auto,node distance =1 cm and 1 cm,semithick,
    state/.style ={circle, draw, minimum width = 1.0 cm,inner sep=0pt},
    hidden/.style ={circle, dashed, draw, minimum width = 1.0 cm,inner sep=0pt},
    point/.style = {circle, draw, inner sep=0.04cm,fill,node contents={}},
    bidirected/.style={Latex-Latex,dashed},
    el/.style = {inner sep=2pt, align=left, sloped}
}
\DeclareMathOperator*{\argmax}{argmax\,\,}
\DeclareMathOperator*{\maxl}{max\,\,}
\DeclareMathOperator*{\argmin}{argmin\,\,}
\newcommand{\subfiguresize}{0.22\textwidth}
\newcommand{\subfigurefirst}{0.24\textwidth}
\newcommand{\legendsize}{0.85\textwidth}
\newcommand{\alltreatments}{\beta}
\newcommand{\Alltreatments}{\bm{\beta}}
\newtheorem{theorem}{Theorem}
\newtheorem{problem}{Problem}
\newtheoremstyle{case}{}{}{}{}{}{:}{ }{}
\theoremstyle{case}
\newtheorem{case}{Case}
\newcommand{\threshold}{\theta}
\newcommand{\Activations}{\mathbf{A}}
\newcommand{\activations}{\mathbf{a}}
\newcommand{\Influence}{I}
\newcommand{\influence}{i}
\newcommand{\bftab}{\fontseries{b}\selectfont}
\newcolumntype{P}[1]{>{\arraybackslash}m{#1}}
\newcolumntype{M}[1]{>{\centering\arraybackslash}m{#1}}
\newcommand{\kdd}[1]{#1}
\newcommand{\myspace}{-0.0em}
\newcommand{\aaaipre}[1]{#1}
\def\true{true}
\def\arxiv{false}
\newcommand{\yesappendix}[1]{}
\newcommand{\noappendix}[1]{{\textcolor{blue}{\textit{#1}}}}
\newcommand{\yesappendix}[1]{#1}
\newcommand{\noappendix}[1]{} 
\title{Heterogeneous Peer Effects in the Linear Threshold Model}
\author{
    Christopher Tran, Elena Zheleva
}
\author{
    Christopher Tran, Elena Zheleva
}
\begin{document}

\maketitle

\begin{abstract}
	The Linear Threshold Model is a widely used model that describes how information diffuses through a social network. According to this model, an individual adopts an idea or product after the proportion of their neighbors who have adopted it reaches a certain threshold. Typical applications of the Linear Threshold Model assume that thresholds are either the same for all network nodes or randomly distributed, even though some people may be more susceptible to peer pressure than others. To address individual-level differences, we propose causal inference methods for estimating individual thresholds that can more accurately predict whether and when individuals will be affected by their peers. We introduce the concept of heterogeneous peer effects and develop a Structural Causal Model which corresponds to the Linear Threshold Model and supports heterogeneous peer effect identification and estimation. We develop two algorithms for individual threshold estimation, one based on causal trees and one based on causal meta-learners. Our experimental results on synthetic and real-world datasets show that our proposed models can better predict individual-level thresholds in the Linear Threshold Model and thus more precisely predict which nodes will get activated over time.
\end{abstract}

\section{Introduction}\label{sec:intro}

Social networks play a vital role in spreading information, ideas, and behaviors.
For example, medical and agricultural innovations can spread through the world~\cite{rogers2010diffusion}, and new products can spread via word of mouth or viral marketing~\cite{kempe2003maximizing}.
Emotions such as happiness~\cite{fowler-bmj08} and hatefulness~\cite{ribeiro2018characterizing} have also been observed to spread through social networks.
These processes, known as \emph{information diffusion}, have traditionally been studied in the social sciences~\cite{granovetter1978}, but more recently have motivated applications in viral marketing~\cite{kempe2003maximizing} and recommender systems~\cite{nikolakopoulos2019}.
Many models exist that capture the diffusion process, including epidemic models~\cite{kermack1927}, voter models~\cite{clifford1973}, the Independent Cascade~\cite{kempe2003maximizing}, and the Linear Threshold Model (LTM)~\cite{granovetter1978}.
In this work, we focus on LTM\@.

According to LTM, an individual is influenced to adopt a product or idea if the proportion of their friends who have already adopted that product or idea is above some threshold.
\aaaipre{
LTM is a popular model used in many settings, such as modeling the spread of ideas~\cite{rogers2010diffusion}, predicting diffusion~\cite{soni-cd19}, and the development of many prominent influence maximization algorithms~\cite{kempe2003maximizing,chen2010,goyal2011,li2018}.
}
However, node thresholds are typically assumed to either be the same for all nodes or randomly distributed. Moreover, there are no methods for estimating individual thresholds ~\cite{talukder2019threshold}, even though different individuals may have different susceptibility to social influence.
Moreover, LTM has not been studied through a causal inference lens, even though LTM captures a causal concept: ``how many friends does it take to buy a product before they \emph{cause} an individual to buy the same product?''
In our work, we seek to address these shortcomings.

We develop two models for estimating individual-level thresholds from data.
Our models reflect real-world scenarios in which some individuals are more easily influenced than others or differently by specific friends.
To address the variety of characteristics and behaviors of individuals, we propose a new concept, \emph{heterogeneous peer effects} (HPEs) in networks, and contrast it with heterogeneous treatment effects for IID data.
While recent work has developed methods for estimating heterogeneous treatment effects in networks, they only consider the network structure as a confounder or as a proxy to latent confounders for individual-level effects and do not estimate heterogeneous peer effects~\cite{veitch-neurips19,guo-wsdm20}.

To facilitate the estimation of peer effects, we develop a Structural Causal Model (SCM) that is specific to LTM and encodes \emph{interference} by \emph{contagion}.
Interference is the influence of treatments or outcomes of peers on an individual, and contagion is the process of friends' outcomes influencing an individual's outcome~\cite{ogburn-stat14}.
For example, a user (e.g., Angelo in Figure~\ref{fig:example_threshold}) might buy sunglasses (outcome) if their friends already bought them (contagion).
Prior work has studied the role of SCMs in the presence of interference~\cite{ogburn-stat14,bhattacharya-uai19} and the estimation of average peer effects~\cite{arbour-kdd16} but not peer effect heterogeneity.
To demonstrate the value of our node threshold prediction algorithms, we evaluate them on three tasks, node threshold estimation, activated node prediction, and diffusion size prediction, using both synthetic and real-world data.

\section{Related work}\label{sec:related}

Diffusion models have been studied for decades in epidemiology~\cite{kermack1927} and opinion dynamics~\cite{holley1975}.
Two popular models for social networks are the Linear Threshold Model (LTM)~\cite{granovetter1978} and Independent Cascade (IC) Model~\cite{goldenberg2001}.
These models have been used in diffusion prediction~\cite{soni-cd19} and influence maximization~\cite{kempe2003maximizing}.
While some work has focused on estimating parameters of IC models~\citep{saito2008,bourigault2016,kalimeris2018}, not much work has explored threshold estimation in the Linear Threshold Model.

A recent survey on influence maximization techniques using LTM highlighted the lack of threshold estimation models~\cite{talukder2019threshold}.
Goyal et al.\ learn edge diffusion probabilities but do not learn individual thresholds~\citep{goyal2010}.
Recent work identified a problem in threshold estimation called the ``opacity problem''~\citep{berry-social19}.
The opacity problem states that the thresholds estimated from data are upwardly biased and propose only to use nodes whose thresholds are \emph{precisely measured}.
They use those thresholds in a regression model to estimate thresholds for all nodes.
In contrast, we estimate individual thresholds on ``snapshots'' of networks.

Our work focuses on defining and identifying heterogeneous effects of peers through the use of a Structural Causal Model (SCM)~\cite{pearl-book09} to estimate individual thresholds in the LTM.
Shalizi and Thomas explore causal inference under the presence of homophily and contagion~\cite{shalizi2011homophily}.
They address the problem of identification of social effects when influence is decoupled from group effects~\cite{manski-social93}.
Ogburn and VanderWeele presented an extensive discussion of SCMs for interference~\cite{ogburn-stat14}.
Arbour et al.\ utilize abstract ground graphs for estimating average effects in social networks~\cite{arbour-kdd16}.
To the best of our knowledge, there have been no SCMs developed for the LTM.
In addition, recent work has introduced methods for estimating heterogeneous \emph{treatment} effects in networks but do not estimate effects of peers~\cite{guo-wsdm20}.

\section{Problem Description}\label{sec:problem_setup}

To define the problems of individual threshold estimation and conditional average peer effect estimation, we first present the data, linear threshold model, and define heterogeneous peer effects.

\subsection{Data Model}\label{sec:diffusion}

\begin{figure}
    \centering
    \includegraphics[width=0.65\columnwidth]{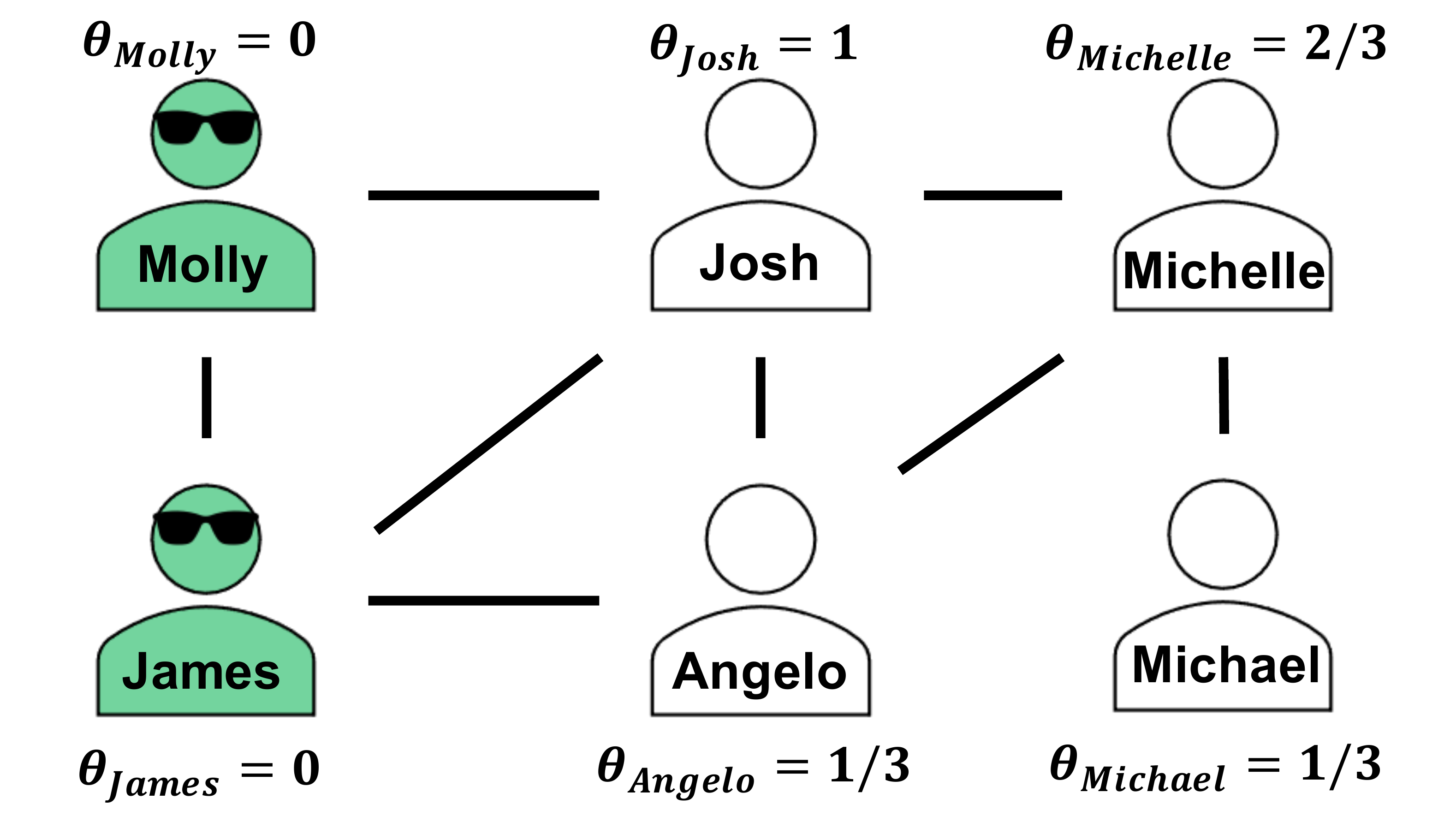}
    \caption{A toy example illustrating the Linear Threshold Model diffusion process, where Molly and James are activated. In the first time step, Angelo will become activated, since his threshold, $\threshold_{Angelo}=1/3$.}\label{fig:example_threshold}
    \vspace{-1.0em}
\end{figure}

Let $G=(\mathbf{V}, \mathbf{E})$ denote an attributed social network, where $\mathbf{V}$ is the set of nodes and $\mathbf{E}$ is the set of edges between nodes.
If two nodes $v, u \in \mathbf{V}$ are connected by an edge, then $(v, u) \in \mathbf{E}$, denotes an edge from $v$ to $u$.
Define the set of neighbors of $v$ to be $N(v)=\{u: u \in \mathbf{V}, (u, v) \in \mathbf{E} \}$.
Each node $v \in \mathbf{V}$ has an observed $m$-dimensional vector of attributes, $\mathbf{X}_v$, unobserved characteristics, $\mathbf{U}_v$, and an outcome of interest $Y_v \in \{0, 1\}$, which is a binary indicator of whether the node is activated (e.g., \ whether an individual has bought new sunglasses).
We define the set of activated nodes at time $t$ to be $\mathcal{D}_t = \{ v : Y_v=1\}$.

\subsection{Linear Threshold Model}\label{sec:linear_threshold}

According to the Linear Threshold Model (LTM), each node $v$ has an activation threshold $\threshold_v$.
Given an initial set of activated nodes, $\mathcal{D}_0  \subseteq \mathbf{V}$, diffusion occurs in discrete steps, $t=1, 2, \dots, T$.
In each time step $t$, a node $v \in \mathbf{V} \setminus \mathcal{D}_i$ is activated if the \emph{activation influence}, the weighted proportion
of its activated neighbors, reaches the node's threshold $\threshold_v$:
\begin{equation}
\label{eq:threshold}
  \sum_{u \in N(v)} w_{uv} Y_u^t \geq \threshold_v,
\end{equation}
where $w_{uv}$ is the normalized influence weight of neighbor $u$ on $v$.
According to LTM, nodes can only become activated as activated neighbors increase.
In practice, node thresholds are implemented by considering random or uniform thresholds~\cite{talukder2019threshold} even though the propensity to be influenced can vary from individual to individual.
Our goal in this work is to estimate thresholds for all nodes.
\begin{problem}\label{prob:hte_ltm}
	(Individual node threshold estimation for LTM) 
	Let $G=(\mathbf{V}, \mathbf{E})$ be a graph and $\bm{\threshold} = \{\threshold_v \mid v \in \bm{V}\}$ be a set of node thresholds. The goal is to estimate the thresholds for all nodes, $\bm{\hat{\threshold}} = \{\hat{\threshold}_v \mid v \in \mathbf{V}\}$, such that the average mean squared error between $\theta$ and $\hat{\theta}$ is minimized:
	\begin{equation}
    	\argmin_{\hat{\theta}_v} \text{MSE} = \frac{1}{|V|} \sum_{v}^{|V|} \Big(\theta_v - \hat{\theta}_v \Big)^2.
    \end{equation}
\end{problem}

Figure~\ref{fig:example_threshold} shows a toy example of a social network with five individuals.
Each node has their own threshold (e.g., $\threshold_{\text{Angelo}} = 1/3$ means Angelo's threshold is $1/3$).
The initial set of activations are the individuals who have adopted a product (sunglasses), which consists of two individuals: $\mathcal{D}_0$ = \{Molly, James\}.
Assuming equal weights, Angelo will buy new sunglasses in the first time step since one of his three friends (James) has already bought them.
No one else will buy sunglasses in subsequent steps since Josh's threshold is $1$ and Michelle's threshold is $2/3$.

\subsection{Causal Inference in LTM}

Here, we connect causal inference with the Linear Threshold Model under the \aaaipre{potential outcomes framework} and define heterogeneous peer effects and their estimation.

\subsubsection{Causal inference in networks}

A common assumption in estimating causal effects is the stable unit treatment value assumption (SUTVA)~\cite{rubin-1978}, which is the assumption that the outcome of an individual is independent of the treatment assignment of other individuals: $Y_v(Z_v) \perp Z_u, \forall u \neq v \in \bm{V}$.
However, this assumption is violated in network data because of the interaction between individuals - known as \emph{interference}~\cite{ogburn-stat14} - which can lead to biased causal effect estimations.
Interference occurs in the LTM, where activation depends on the activation of neighbors. 
Thus, we need to account for node features \emph{and} network interference when estimating effects.

Define $\Activations = \{Z_v \mid v \in \mathbf{V}\}$ to be the set of treatment variables for all nodes in the network.
Then, the outcome of a node is dependent on two variables: the individual treatment and the set of treatments in the network: $Y_v(Z_v, \mathbf{A})$.
We can define the average treatment effect (ATE) and the average peer effect (APE)~\cite{arbour-kdd16,fatemi-icwsm20} as:
\begin{gather}
  \text{ATE} = E[Y(Z=1) -  Y(Z=0) \mid \Activations =  \activations],
  \\
  \text{APE} =  E[Y(\Activations =  \activations) - Y(\Activations =  \activations') \mid Z=z].
\end{gather}
\noindent ATE estimates the effect of treatment on a node, keeping other treatment assignments the same. 
APE estimates the effect when keeping a node's treatment fixed while changing treatments of other nodes.

\emph{Contagion} is a case of \emph{interference}, where peer outcomes in the previous time step affect individual outcomes.
\aaaipre{
In the LTM, the outcome of peers in the previous time step are captured by  $\Activations = \{Y_v^{t-1} \mid v \in \mathbf{V} \}$, which is the treatment variable of interest.
}
Neighborhood-level variables (e.g. $\Activations$) are \emph{relational} and individual-level variables (e.g. $\mathbf{X}, Y, Z$) are \emph{propositional} variables.

\subsubsection{Heterogeneous peer effects}\label{sec:hpe}

Heterogeneous treatment effects (HTEs) occur when subpopulations react differently to treatment.
HTE is expressed through the conditional average treatment effect (CATE).
CATE is defined as the expected difference in potential outcomes, conditional on an individual's features: \( \tau(\mathbf{x}) \equiv E[Y(z) - Y(z') \mid \mathbf{X} = \mathbf{x}] \), which results in personalized effect estimations.
In LTM, this is the expected change in activation if the node was activated vs.\ not activated in the previous time point with features \( \mathbf{X}_v = \mathbf{x} \).
So far, we have ignored the influence of neighbors.
Next, we describe how to incorporate it into the HTE estimation.
First, we define CATE under interference as:
\begin{align}
  \tau(\mathbf{x}) = E[Y(Z=1) - Y(Z=0) | \mathbf{X}=\mathbf{x}, \Activations=\activations].
\end{align}

Different individuals may have different peer effects based on personal characteristics, such as personality and susceptibility to influence. 
To capture this idea, we define \emph{conditional average peer effect (CAPE)}:
\begin{equation}\label{eq:cape}
  \rho(\mathbf{x}) = E[Y(\Activations = \activations) - Y(\Activations = \activations') \mid \mathbf{X}=\mathbf{x}, Z=z]
\end{equation}
\emph{Heterogeneous peer effects (HPEs)} occur when there are at least two possible assignments of $\mathbf{X}$, $\mathbf{x}_i$ and $\mathbf{x}_j$, for which $\rho(\mathbf{x}_i)\neq \rho(\mathbf{x}_j)$. 
Since not all nodes in the network affect every other node, we use neighbors of nodes for contagion.
We can define \( \Activations_v = \{ Y_u^{t-1} \mid u \in N(v) \} \), the set of outcomes of neighbors of \( v \) in the previous time step, which affects the outcome and APE in the current time step.

\begin{problem}\label{prob:cape}
  (Conditional average peer effect estimation)
  Let \( G = (\mathbf{V}, \mathbf{E}) \) be a graph.
Let $\hat{\rho}(\mathbf{X})$ be an estimate for the CAPE.
  The goal is to minimize the expected error between the estimate and its true value:
  \begin{equation}
    \min_{\hat{\rho}} E[(\rho(\mathbf{X}) - \hat{\rho}(\mathbf{X}))^2].
  \end{equation}
\end{problem}
In general, it is not trivial to estimate CAPE since relational variables are sets, and $\Activations_v$ for $v \in \mathbf{V}$ are different sizes.
To circumvent this problem, we model relational variables through aggregate functions, such as the mean or variance~\cite{arbour-kdd16}.
Next, we discuss the link between CAPE estimation and threshold estimation.

\section{HPE and Threshold Estimation}\label{sec:methodology}

We begin by linking CAPE estimation to threshold estimation.
Second, we introduce a causal model to identify the CAPE.
Then, we map the CAPE estimation to a problem of estimating triggers for heterogeneous effects, which allows the estimation of node thresholds for LTM.

\subsection{Causal model for LTM}\label{sec:hpe_ltm}

In LTM, we define the relational variable at time $t$ as the \emph{activation influence} on the left side of Equation~\ref{eq:threshold}:
\begin{equation}
  \label{eq:influence}
  \Influence_v^t = \sum_{u \in N(v)} w_{uv}Y_u^{t-1}.
\end{equation}
We define the treatment variable, \( Z_v \), as the previous outcome: \( Z_v = Y_v^{t-1} \) and the CAPE is defined with the aggregation function in~\eqref{eq:influence}:
\begin{align}
  \rho(\mathbf{x}) = & E[Y(\Influence^t = \influence^t)  - Y(\Influence^t = \influence^{t'}) \mid \mathbf{X} = \mathbf{x}, Z=z],
\end{align}
where $\influence^t$ and $\influence^{t'}$ correspond to two different values of $\mathbf{A}$. 
In order to identify this effect, we need to account for correct \emph{adjustment variables}.
Hence, we now introduce a causal model for the LTM process to identify CAPE for LTM.

\begin{figure}
\centering
\resizebox{0.85\columnwidth}{!}{
  \begin{tikzpicture}[trim left=-0.5cm]
    \node[hidden] (u1) at (0,0.66) {{$\mathbf{U}_v$}};
    \node[state] (x1) at (1.5,0) {{$\mathbf{X}_v$}};
    \node[state] (y11) at (3.5,0) {{$Y_v^0$}};
    \node[state] (y12) at (5.5,0) {{$Y_v^1$}};
    \node[state] (y13) at (7.5,0) {{$Y_v^2$}};
    \node[state] (y1T) at (9.5,0) {{$Y_v^T$}};
    \node[state] (theta) at (1.5,1.25) {{$\theta_v$}};

    \node[state] (xN) at (1.5,2.5) {{$\mathbf{X}_N$}};
    \node[state] (yN1) at (3.5,2.5) {{$\Influence_v^0$}};
    \node[state] (yN2) at (5.5,2.5) {{$\Influence_v^1$}};
    \node[state] (yN3) at (7.5,2.5) {{$\Influence_v^2$}};
    \node[state] (yNT) at (9.5,2.5) {{$\Influence_v^T$}};

    \path (x1) edge (y11);
    \path (y11) edge (y12);
    \path (y12) edge (y13);
    \path (y13) edge[dashed] (y1T);

    \path (theta) edge (y11);
    \path (theta) edge[out=0, in=150] (y12);
    \path (theta) edge[out=0, in=150] (y13);
    \path (theta) edge[out=0, in=150] (y1T);

    \path (xN) edge (yN1);

    \path (yN1) edge (yN2);

    \path (yN2) edge (yN3);
    \path (yN3) edge[dashed] (yNT);

    \path(y11) edge (yN2);

    \path(y12) edge (yN3);
    \path(y13) edge[dashed] (yNT);

    \path(yN1) edge (y12);

    \path(yN2) edge (y13);

    \path(yN3) edge[dashed] (y1T);

    \path(x1) edge[out=-30, in=-150] (y12);
    \path(x1) edge[out=-30, in=-150] (y13);
    \path(x1) edge[out=-30, in=-150] (y1T);

    \path(xN) edge[out=30, in=150] (yN2);

    \path(xN) edge[out=30, in=150] (yN3);
    \path(xN) edge[out=30, in=150] (yNT);

    \path(u1) edge (x1);
    \path(u1) edge (theta);

  \end{tikzpicture}
}
\caption{A causal model of peer effects for a node $v$ that reflects the LTM process.}
\label{fig:causal_model}
\vspace{-1.0em}
\end{figure}
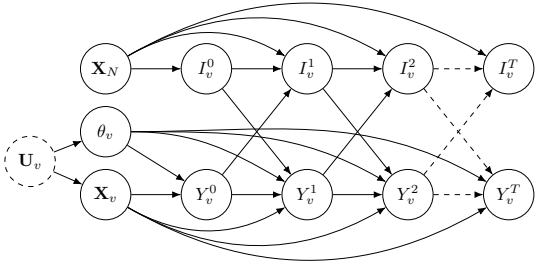

To estimate CAPE in LTM, we develop a Structural Causal Model (SCM) of diffusion that corresponds to the LTM process.
SCMs are graphical models that encode cause-effect relationships between variables and allow for the identification of effects~\cite{pearl-book09}.

Figure~\ref{fig:causal_model} shows the causal model of diffusion we develop to capture the LTM process.
Here, $\mathbf{X}_N$ represents a \emph{set} of neighbor features, and $\Influence_v^t$ is the random variable of \emph{activation influence} on the outcomes of $v$'s neighbors.
An arrow from one variable to another denotes a cause-effect relationship.
Contagion from node $v$'s neighbors are captured through the arrows from $\Influence_v^t$ to $Y_v^{t+1}$.

In order to \emph{identify} the effect of activation influence on a node's own activation (e.g., the effect of $\Influence_v^{1}$ on $Y_v^{2}$, marked in blue in Figure~\ref{fig:causal_model}), we need to find the correct \emph{adjustment set} that blocks all back-door paths, based on the back-door criterion~\cite{pearl-book09,arbour-kdd16}.
In our example, three arrows are going into the treatment $\Influence_v^{1}$: $Y_v^{0}$, $\Influence_v^{0}$ and $\mathbf{X}_N$, opening potential backdoor paths.
Based on this, a sufficient set to block all backdoor paths is the set of $\mathbf{X}_v$, the previous outcome $Y_v^{1}$, and the threshold $\threshold_v$.
In the CAPE formula defined in Equation~\eqref{eq:cape}, this is the heterogeneous subgroup $\mathbf{X} = \mathbf{x}$ that we need to consider.
The special case is the initial peer effect of $\Influence_v^0$ on $Y_v^1$, where there are no backdoor paths from $\Influence_v^0$ to $Y_v^1$.
In this case, we can estimate the effect of $\Influence_v^0$ without adjustment.

To match the characteristics of LTM, we define a functional form of the outcome of node $v$.
The outcome $Y_v^{t+1}$ can be represented as a function of $\big( \mathbf{X}_v, Y_v^{t}, \Influence_v^{t}, \threshold_v \big)$. 
We can define the functional form of $Y_v^{t+1}$ for $t \geq 1$ as:
\begin{equation}
  Y_v ^ {t+1} (\mathbf{X}_v, Y_v^{t}, \Influence_v^{t}, \threshold_v)=
      \begin{cases}
      Y_v^{t} & \text{if $Y_v^{t} = 1$,} \\
      \mathbbm{1} \big[ \Influence_v^{t} \geq \threshold_v \big] & \text{if $Y_v^{t} = 0$.}
      \end{cases}
      \nonumber
  \end{equation}
This form correctly captures the LTM process: a node stays activated if already activated. Otherwise, it activates based on its threshold and the activations of neighbors.
theore\subsection{Individual threshold estimation for LTM}

To estimate CAPE in LTM, we estimate the threshold $\theta$ assuming that $\Influence^t$ is known.
We first map the problem of estimating thresholds to the problem of estimating triggers~\cite{tran-aaai2019} in the context of heterogeneous peer effects.
This mapping allows us to adapt causal trees to the problem of estimating thresholds and opens avenues for further algorithm development.

Since the peer influence, $\Influence_v^t$, is a continuous value, it cannot be modeled using traditional causal inference methods which consider binary treatment values.
We reformulate the peer effect as an effect of a binary treatment: the expected difference when a node's activation influence, $\Influence_v^t$, is above and below its threshold, $\theta_v$.
Then our goal is transformed into estimating the correct node threshold that correctly identifies when a node is above and below its actual threshold.
To do this, we map the threshold estimation problem to the problem of estimating triggers for heterogeneous effects~\cite{tran-aaai2019}.

A \emph{trigger} is defined as the minimum amount of treatment necessary to change an outcome~\cite{tran-aaai2019}.
Some examples of triggers are the minimum number of days a patient needs to take a medicine to be cured or a minimum discount needed for a customer to buy a product.
For the problem of threshold estimation, our causal question is: ``what is the minimum number of activated neighbors that can cause a node with certain attributes to become activated''?
Define a trigger, $\hat{\theta}$, with sets of \aaaipre{potential outcomes} above and below the trigger:
\begin{gather}
    \mathcal{Y}^t(\Influence^t \geq \hat{\theta}) = \{ Y_v^t \mid \Influence_v^t \geq \hat{\theta} \}
    \\
    \mathcal{Y}^t(\Influence^t < \hat{\theta}) = \{ Y_v^t \mid \Influence_v^t < \hat{\theta} \}
\end{gather}
Then, we can define the average peer effect with a global trigger, $\hat{\theta}$ as: $E[\mathcal{Y}^t(\Influence^t \geq \hat{\threshold}) - \mathcal{Y}^t(\Influence^t < \hat{\threshold})]$, which is the case when a global threshold is defined for LTM.

Since we are interested in individual node thresholds, we estimate \emph{heterogeneous triggers} rather than global triggers.
We can now define CAPE with a trigger:
\begin{equation}
\label{eq:trigger_cate}
    \mathcal{P}(\mathbf{x}) = E[\mathcal{Y}(\Influence^t \geq \hat{\threshold}) - \mathcal{Y}(\Influence^t < \hat{\threshold}) \mid \mathbf{X} = \mathbf{x}, Z=z].
\end{equation}
Estimating CAPE with a trigger translates to estimating the effect of influence crossing the trigger.
The implication of mapping the node threshold estimation to a trigger-based heterogeneous treatment effect estimation problem is that an accurate trigger is the best estimate of the node threshold.
\aaaipre{
\begin{theorem}\label{thm:trigger_theorem}
    For any node $v \in \bm{V}$, let $\threshold_v$ be $v$'s true threshold. Then $\hat{\threshold}_v$ that maximizes the CAPE with a trigger %
    \begin{equation}
        \argmax_{\hat{\theta}_v} E[\mathcal{Y}(\Influence^t_v \geq \hat{\threshold}_v) - \mathcal{Y}(\Influence^t_v < \hat{\threshold}_v) \mid \mathbf{X}_v, Z_v],
    \end{equation}
    provides the best estimate of the node threshold, $\threshold_v$.
\end{theorem}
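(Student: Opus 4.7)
My plan is to leverage the deterministic activation rule of the Linear Threshold Model, under which the realized outcome is $\mathcal{Y} = \mathbbm{1}[\Influence^t_v \geq \threshold_v]$, and to show that the CAPE-with-trigger objective in eq.~(12) is maximized exactly when the trigger $\hat{\threshold}_v$ coincides with the true threshold $\threshold_v$.

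First, I would establish an upper bound on the objective. Because $\mathcal{Y}$ is binary, the conditional expected difference
\[
E[\mathcal{Y}(\Influence^t_v \geq \hat{\threshold}_v) - \mathcal{Y}(\Influence^t_v < \hat{\threshold}_v) \mid \mathbf{X}_v, Z_v]
\]
is at most $1$, with equality iff $\mathcal{Y}=1$ almost surely on $\{\Influence^t_v \geq \hat{\threshold}_v\}$ and $\mathcal{Y}=0$ almost surely on $\{\Influence^t_v < \hat{\threshold}_v\}$ (given the conditioning). Substituting $\hat{\threshold}_v = \threshold_v$ and invoking the LTM activation rule immediately yields both conditions, so the true threshold attains the bound with objective value~$1$.

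Next, I would argue that any $\hat{\threshold}_v \neq \threshold_v$ produces a strictly smaller value. If $\hat{\threshold}_v > \threshold_v$, the ``below trigger'' event $\{\Influence^t_v < \hat{\threshold}_v\}$ contains units with $\Influence^t_v \in [\threshold_v, \hat{\threshold}_v)$ that nevertheless activate, so $E[\mathcal{Y}(\Influence^t_v < \hat{\threshold}_v) \mid \mathbf{X}_v, Z_v] > 0$ and the difference drops below~$1$. The case $\hat{\threshold}_v < \threshold_v$ is symmetric: the ``above trigger'' event now contains inactive units with $\Influence^t_v \in [\hat{\threshold}_v, \threshold_v)$, so the first term falls below~$1$. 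Combining the two cases shows that $\threshold_v$ maximizes the CAPE with trigger.

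The main obstacle I anticipate is formalizing this strict inequality: if the conditional distribution of $\Influence^t_v$ given $(\mathbf{X}_v, Z_v)$ assigns zero mass to the interval between $\hat{\threshold}_v$ and $\threshold_v$, then a misspecified trigger can still attain objective value~$1$ and the argmax ceases to be unique. I would therefore phrase the conclusion as: \emph{$\threshold_v$ is always an argmax of the CAPE with trigger, and is the unique argmax whenever the conditional distribution of $\Influence^t_v$ places positive mass on every neighborhood of $\threshold_v$.} Under this positivity assumption the theorem reduces to the three steps above, and no further machinery is required.
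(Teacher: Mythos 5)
Your proposal follows essentially the same route as the paper's proof: both establish that the CAPE-with-trigger objective equals $1$ exactly when $\hat{\threshold}_v = \threshold_v$ and falls strictly below $1$ in the two misspecified cases ($\hat{\threshold}_v < \threshold_v$ contaminates the above-trigger group with non-activated units, $\hat{\threshold}_v > \threshold_v$ contaminates the below-trigger group with activated units), the paper phrasing this as a mean over finite sets of neighbor-activation assignments rather than over a conditional distribution. Your added positivity caveat --- that the argmax is unique only when the influence distribution places mass between $\hat{\threshold}_v$ and $\threshold_v$ --- is a point the paper's proof implicitly assumes (e.g.\ in asserting $N_{\mathcal{Z}_1} < N_1$) without stating, so your version is in fact slightly more careful.
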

}
\yesappendix{
\noindent The proof is in the Appendix\footnote{https://github.com/edgeslab/hpe-ltm}.
}
\noappendix{
\begin{proof}
    Suppose $|N(v)|=n$ and $w_{uv}$ be an arbitrary weight such that $\sum_{u} w_{uv} = 1$.
	Let the true threshold of node $v$ be $\theta_v$.
    Define $\mathbf{A_v} = \mathbf{a_i}$ to be an assignment of activations of neighbors of $v$ and $W_v(\alpha_i)$ to be the outcome for the activated set $\mathbf{a_i}$:
    \begin{align}
		W_v(\alpha) & = 
    	\begin{cases}
    		0 & \text{if $\Influence_v(\mathbf{a_i}) < \theta_v$}, \\
            1 & \text{if $\Influence_v(\mathbf{a_i}) \geq \theta_v$}.
		\end{cases}
    \end{align}
    Where $\Influence_v$ is the activation influence. 
    Let the set of potential outcomes, $\mathcal{Y}_v$, below and above the estimated trigger $\hat{\theta_v}$ be:
	\begin{align}
		\mathcal{Y}_v(\Influence_v < \hat{\theta}_v) & = \{ W_v(\mathbf{a_i}) \mid \Influence_v(\mathbf{a_i}) < \hat{\theta_v} \}
		\\
		\mathcal{Y}_v(\Influence_v \geq \hat{\theta}_v) & = \{ W_v(\mathbf{a_i}) \mid \Influence_v(\mathbf{a_i}) \geq \hat{\theta}_v \}.
    \end{align}
    Let $N_0$ and $N_1$ be the size of the sets $\mathcal{Y}_v(\Influence_v < \hat{\theta}_v)$ and $\mathcal{Y}_v(\Influence_v \geq \hat{\theta}_v)$, respectively.
    Define the set of outcomes when node $v$ is not activated and activated with the true threshold as $\mathcal{Z}_v(0)$ and $\mathcal{Z}_v(1)$:
    \begin{align}
        \mathcal{Z}_v(0) &= \{W_v(\mathbf{a_i}) \mid \Influence_v(\mathbf{a_i}) < \theta_v \},
        \\
        \mathcal{Z}_v(1) &= \{W_v(\mathbf{a_i}) \mid \Influence_v(\mathbf{a_i}) \geq \theta_v \}.
    \end{align}
    Let $N_{\mathcal{Z}_0}$ and $N_{\mathcal{Z}_1}$ be the cardinalities of set $\mathcal{Z}_v(0)$ and $\mathcal{Z}_v(1)$, respectively.
    The difference between $\mathcal{Y}_v$ and $\mathcal{Z}_v$ is that $\mathcal{Y}_v$ contains the true potential activations based on the \emph{estimated} threshold and $\mathcal{Z}_v$ contains the true potential activations based on the \emph{true} threshold.
    We define expectation over the sets $\mathcal{Y}_v$ and $\mathcal{Z}_v$ as the mean over the set.
    There are three cases for the estimated threshold, $\hat{\theta}_v$:
    \begin{case}[$\hat{\theta}_v < \theta_v$]
        We compute the expected outcomes below and above the estimated trigger $\hat{\theta}_v$:
        \begin{align}
            E[\mathcal{Y}_v(\Influence_v < \hat{\theta}_v)]  & = E[\{ W_v(\mathbf{a_i}) \mid \Influence_v(\mathbf{a_i}) < \hat{\theta}_v\}] \nonumber
            \\
            &= \frac{1}{N_0} \sum_{i} W_v(\mathbf{a_i}) = 0,
        \end{align}
        since there are no times when $W_v$ will be $1$ (activated) since all outcomes below the estimated threshold $\hat{\theta}_v$ are also below the true threshold $\theta_v$.
        The expected value above the estimated threshold is:
		\begin{gather}
            E[\mathcal{Y}_v(\Influence_v  \geq \hat{\theta}_v)]   = E[\{ W_v(\mathbf{a_i}) \mid \Influence_v(\mathbf{a_i}) \geq \hat{\theta}_v\}] \nonumber
            \\
            = \frac{1}{N_1} \sum_i W_v(\mathbf{a_i}) = \frac{1}{N_1} \sum_i W_v(\mathbf{a_i}) + \frac{1}{N_1} \sum_j W_v(\mathbf{a_j}) \nonumber
            \\
            = \frac{1}{N_1} \cdot N_{\mathcal{Z}_1} = \frac{N_{\mathcal{Z}_1}}{N_1}.
        \end{gather}
        The effect is the difference above and below the estimated trigger:
        \begin{align}
            E[\mathcal{Y}_v(\Influence_v \geq \hat{\theta_v}) - \mathcal{Y}_v(\Influence_v < \hat{\theta_v})] & = \frac{N_{\mathcal{Z}_1}}{N_1} - 0 = \frac{N_{\mathcal{Z}_1}}{N_1} < 1.
        \end{align}
        Note that $N_{\mathcal{Z}_1} < N_1$ since $N_1$ has cases above the estimated threshold, which is smaller than the true threshold.
    \end{case}
    \begin{case}[$\hat{\theta}_v > \theta_v$]
        \begin{align}
            E[\mathcal{Y}_v(\Influence_v < \hat{\theta}_v)]  & = E[\{ W_v(\mathbf{a_i}) \mid \Influence_v(\mathbf{a_i}) < \hat{\theta}_v\}]
            \nonumber \\
            & = \frac{1}{N_0} \sum_{i} W_v(\mathbf{a_i}) + \frac{1}{N_0} \sum_{j} W_v(\mathbf{a_i})
            \nonumber \\
            & = 0 + \frac{1}{N_0} \cdot (N_0 - N_{\mathcal{Z}_0}) = \frac{N_0 - N_{\mathcal{Z}_0}}{N_0}.
        \end{align}
        \begin{align}
            E[\mathcal{Y}_v(\Influence_v \geq \hat{\theta}_v)]  & = E[\{ W_v(\mathbf{a_i}) \mid \Influence_v(\mathbf{a_i}) \geq \hat{\theta}_v\}] = \frac{1}{N_1} \cdot N_1 = 1.
        \end{align}
        \begin{align}
            E[\mathcal{Y}_v(\Influence_v \geq \hat{\theta_v}) & - \mathcal{Y}_v(\Influence_v < \hat{\theta_v})] = 1 - \frac{N_0 - N_{\mathcal{Z}_0}}{N_0} < 1.
        \end{align}
    \end{case}
    \begin{case}[$\hat{\theta}_v = \theta_v$]
		\begin{align}
            E[\mathcal{Y}_v(\Influence_v < \hat{\theta}_v)]  & = E[\{ W_v(\mathbf{a_i}) \mid \Influence_v(\mathbf{a_i}) < \hat{\theta}_v\}]
            \nonumber\\
			& = \frac{1}{N_0} \sum_i W_v(\mathbf{a_i}) = 0.
			\\
			E[\mathcal{Y}_v(\Influence_v \geq \hat{\theta}_v)]  & = E[\{ W_v(\mathbf{a_i}) \mid \Influence_v(\mathbf{a_i}) \geq \hat{\theta}_v\}]
            \nonumber \\
            & = \frac{1}{N_1} \sum_i W_v(\mathbf{a_i}) = 1.
			\\
			E[\mathcal{Y}_v(\Influence_v \geq \hat{\theta_v}) & - \mathcal{Y}_v(\Influence_v < \hat{\theta_v})] = 1 - 0 = 1.
        \end{align}
        The maximum causal effect only occurs when $\hat{\theta}_v = \theta_v$.
		Therefore if we can estimate the trigger that maximizes CAPE we find the true node threshold.
		As a result, we can use any trigger-based HTE estimation method to estimate node thresholds.
		\qedhere
    \end{case}
\end{proof}
}

\subsubsection{Trigger-based Causal Trees}

Adapting trigger-based causal tree methods to find individual thresholds is straightforward.
Causal trees work similarly to decision trees in that they greedily split using a partition function.
The main difference between causal trees and decision trees is that the goal of a causal tree is to estimate CATE for different populations of individuals rather than to predict a label.
Causal trees work by maximizing a partition function based on the difference of mean in outcomes while keeping low variance outcomes~\cite{athey2016recursive}.
The intuition is that finding splits that maximize the difference in means finds the heterogeneity in effect.

Tran and Zheleva developed a trigger-based causal tree method called CT-HV~\cite{tran-aaai2019}.
CT-HV works by introducing a validation set for generalizing CATE estimations and reducing variance in outcomes.
In order to learn triggers, an additional search is done at each tree split to find the trigger that maximizes the effect estimation for that split.
We can utilize CT-HV for threshold estimation through CAPE with a trigger.
Instead of an individual-level treatment, we use the activation influence, $\Influence^t$, to estimate triggers.
\yesappendix{
Details of their algorithm are in the Appendix.
}

\noappendix{
Here we describe the Causal Tree (CT-HV) algorithm proposed in~\cite{tran-aaai2019}. Let $\mathbf{X}^{\ell}$ be the features in partition $\ell$, which represents a node in the tree which contains $N_\ell$ samples, and let $\hat{\mu}_1(\ell)$ and $\hat{\mu}_0(\ell)$ be the mean of outcomes when treated and non-treated.
The estimate of CATE of any partition is $\hat{\tau_c}(\mathbf{X}^\ell) = \hat{\mu}_1(\ell) - \hat{\mu}_0(\ell)$.
Given a partition $\ell$ that needs to be partitioned further into two children $\ell_1, \ell_2$, the causal tree algorithm finds the split on features that maximizes the weighted CATE in each child:
\begin{equation}
    \maxl_{\ell_1, \ell_2} N_{\ell_1} \cdot \hat{\tau_c}(\mathbf{X}^{\ell_1}) +  N_{\ell_2} \cdot \hat{\tau_c}(\mathbf{X}^{\ell_2}).
\end{equation}
We refer to the two quantities $N_{\ell_1} \cdot \hat{\tau_c}(\mathbf{X}^{\ell_1})$ and $N_{\ell_2} \cdot \hat{\tau_c}(\mathbf{X}^{\ell_2})$ as partition measures for $\ell_1$ and $\ell_2$.
In our work, we adapt the CT-HV algorithm~\cite{tran-aaai2019} for the problem of node threshold estimation.
}

\noappendix{
In order to learn triggers, an additional search is done at each split to find the trigger that maximizes the effect estimation.
Let $F(\ell)$ represent the partition measure for CT-HV and let $\hat{\theta_v}$ be some estimated trigger for partition $\ell$.
Define $M_1(\ell, \hat{\theta_v})$ and $M_0(\ell, \hat{\theta_v})$ to be the expected mean outcomes when above and below the trigger $\hat{\theta_v}$, and $F(\ell, \hat{\theta_v})$ be the partition measure with trigger $\hat{\theta_v}$: $F(\ell, \hat{\theta_v}) = M_1(\ell, \hat{\theta_v}) - M_0(\ell, \hat{\theta_v}) = E[\mathcal{Y}_\ell(\Influence \geq \hat{\theta_v}) - \mathcal{Y}_\ell(\Influence < \hat{\theta_v})].$
Then to find a trigger in partition $\ell$, we find the trigger that maximizes the partition measure: \( F(\ell, \hat{\theta_v}) \), which results in a unique trigger in each partition.
}

\subsubsection{ST-Learner}

Now we present a novel causal \emph{meta-learner} for estimating triggers for heterogeneous effects.
A causal meta-learner uses the outputs of base learners for estimating effects.
A base learner can be any regression or classification method, such as Linear Regression or Decision Trees.
K{\"u}nzel et al.\ name two commonly used meta-learners in causal inference literature, S- and T-learners, and develop a meta-learner called X-Learner~\cite{kunzel-pnas19}.
However, none of them consider the problem of trigger-based HTE and are not directly suitable for solving the problem of threshold estimation.
Following the idea of causal meta-learners, we propose a novel algorithm, \emph{ST-Learner}\footnote{The ``S'' refers to a \emph{single} learner and ``T'' is for \emph{triggers}, following the naming scheme.}.

The goal of our ST-Learner is to learn the trigger that maximizes the effect.
A base learner is trained to predict the outcome $(Y_v)$ given all node features $(\mathbf{X}_v)$ and the activation influence $(\Influence_v)$: $E[Y_v | \mathbf{X}_v, \Influence_v]$.
Once a learner has been built on the training data (i.e., known activations), the next step is to estimate the trigger for each node.
The estimation considers different possible values of $\Influence_v$ and consists of two steps: outcome prediction and trigger estimation.

Let $\Alltreatments = \{\alltreatments_1, \alltreatments_2, \dots, \alltreatments_n\}$ be all the possible treatment values (i.e., activation influence levels $\Influence_v)$ in the data.
Define $\Theta = \{r_1, r_2, \dots, r_m\}$,  $m \leq n$, to be a set of triggers.
We use $r_i$ to refer to any potential trigger, while $\theta_v$ is a \emph{node's} individual trigger.
In practice, we can consider all potential values of treatment in the training data to be possible triggers, or we can consider a discretization of these values ($m \leq n$).
For any node, we can compute $n$ predictions for all possible treatment values $\alltreatments_k \in \Alltreatments$: ${E}[Y_v | \mathbf{X}_v, \Influence=\alltreatments_k]$.

With the predictions based on different activation influence levels, we can estimate CAPE with a trigger.
Let $\Theta_i^1 = \{\alltreatments_k : \alltreatments_k \geq r_i \}$ and $\Theta_i^0 = \{\alltreatments_j : \alltreatments_j < r_i\}$ be the set of treatment values above and below some trigger $r_i$.
For an arbitrary trigger $r_i$, we get the trigger estimation for ST-Learner:
\begin{align}
    \nonumber
    \argmax_{r_i} \hat{\mathcal{P}}_s(\mathbf{x})
    & =
    \frac{1}{|\Theta_i^1|} \sum_{\alltreatments_k \geq r_i} {E}[Y_v | \mathbf{X}_v=\mathbf{x}, \Influence=\alltreatments_k]
    \\ & -
    \frac{1}{|\Theta_i^0|} \sum_{\alltreatments_j < r_i} {E}[Y_v | \mathbf{X}_v=\mathbf{x}, \Influence=\alltreatments_j]. \nonumber
    \label{eq:st-learner}
\end{align}
We use the base learner to estimate the outcomes above and below the trigger by taking average outcomes.
\aaaipre{
This results in a time complexity of $O(|B| \cdot L)$ where $O(L)$ is the complexity of prediction for a base learner.
}
\yesappendix{
Pseudocode for ST-Learner is available in the Appendix.
}

\noappendix{
\begin{algorithm}[t]
\small
\caption{ST-Learner}\label{alg:stlearner}
  \begin{algorithmic}[1]
    \Require Training set $(\mathbf{X}, Y, \Influence)$, test example $\mathbf{x}^{te}$, base learner $f$, possible treatments $\Alltreatments$, possible triggers $\Theta$
    \Ensure The causal effect estimate for $\mathbf{x}^{te}$
    
    \Function{Train}{$\mathbf{X}, Y, \Influence$}
        \State Fit $f$ to predict $Y$, $f$: $(\mathbf{X}, I) \rightarrow Y$
        \State $\implies f(\mathbf{X}, \Influence) = E[Y \mid \mathbf{X}, \Influence]$
    \EndFunction
    
    \Function{Predict}{$\mathbf{x}^{te}$}
        \For{each $\alltreatments_i$ in $\Alltreatments$}
            \State Compute and store $f(\mathbf{x}^{te}, \alltreatments_i)$
        \EndFor
        \State max\_trigger $= 0$, max\_effect $= 0$
        \For{each $r_i$ in $\Theta$}
            \State Compute $t_1 = \frac{1}{|\Theta_i^1|} \sum_{\alltreatments_k \geq r_i} f(\mathbf{x}^{te}, \alltreatments_k)$
            \State Compute $t_0 = \frac{1}{|\Theta_i^0|} \sum_{\alltreatments_j < r_i} f(\mathbf{x}^{te}, \alltreatments_j)$
            \State $e = t_1 - t_0$
            \If{$e >$ max\_effect}
                \State max\_trigger $= r_i$, max\_effect $= e$
            \EndIf
        \EndFor
        \State \textbf{return} max\_trigger, max\_effect
    \EndFunction
  \end{algorithmic}
\end{algorithm}

Pseudocode is provided for ST-Learner in Algorithm~\ref{alg:stlearner}.
We show the training and prediction subroutines.
In the train subroutine, we train a base learner, \( f \) to predict the outcome \( Y \) using node features \( \mathbf{X} \) and the activation influence \( I \).
To predict a trigger for a new test example, we compute and score the predicted outcome for all treatment values found in the dataset.
Then, we find the trigger that maximizes the average difference above and below that trigger.
}

\section{Experiments}\label{sec:experiments}

We study the effectiveness of our proposed methods
using four synthetic network generation models and three real-world datasets.
We consider three tasks: threshold prediction, activated node prediction, and diffusion size prediction.

\subsection{Experimental setup}\label{sec:setup}

Our main goal in this work is to estimate node thresholds (task 1) and understand their role in downstream tasks, such as activated node prediction (task 2) and diffusion size prediction (task 3).
Comparing LTM to other diffusion models (e.g., cascade models) is outside the scope of this work.
To compare the effectiveness of threshold prediction on these three tasks, we estimate the node thresholds given the data at a snapshot (e.g., end of Jan 2016).
A snapshot is the current structure and activations at network time $t$.
Each model predicts thresholds based on individual snapshots of data.
For example, models learn thresholds using the network snapshot at the end of Jan 2016 ($\mathcal{D}_0$) and estimate node thresholds for all inactivated nodes in the test data.
In this way, we also assess how having snapshots with more activations, and thus more training data, impacts predictions.
\noappendix{
\begin{figure}[t]
	\centering
	\includegraphics[width=0.9\columnwidth]{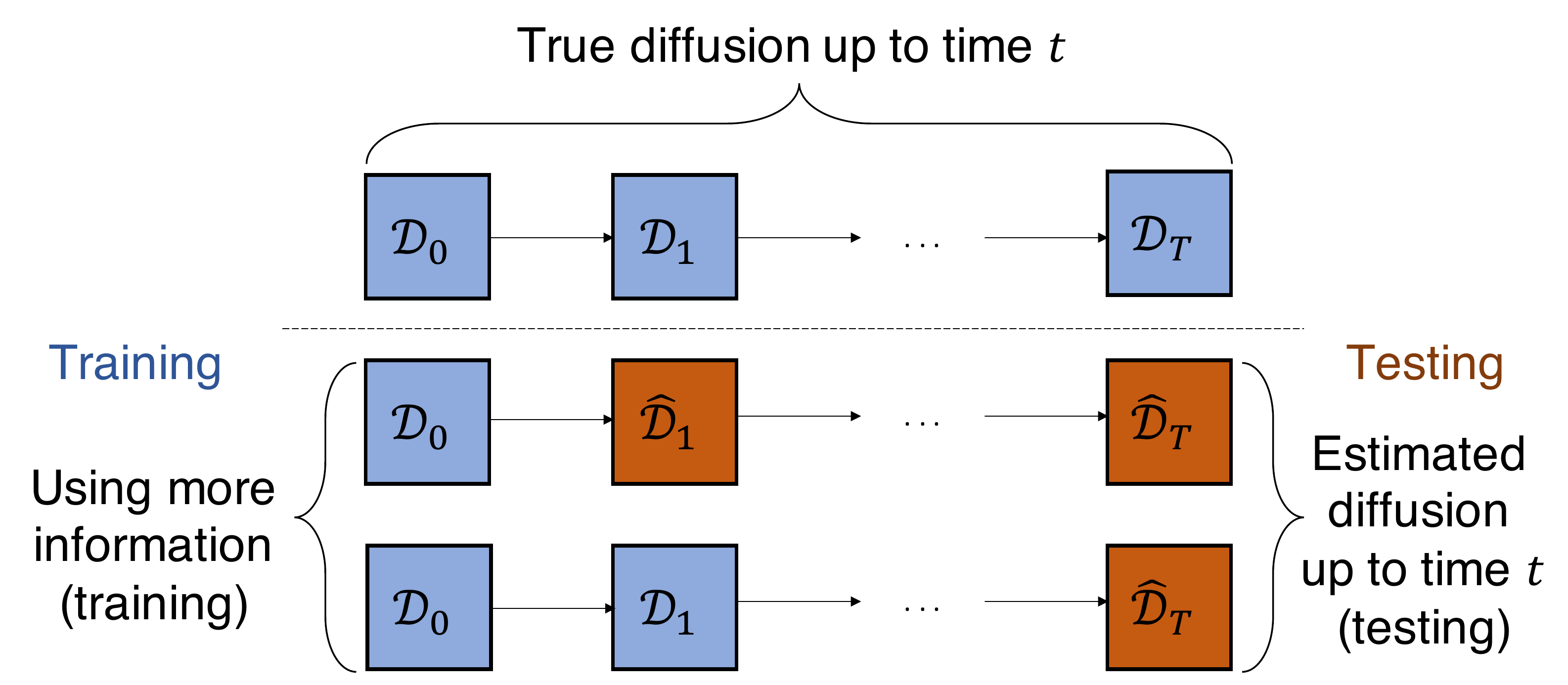}
	\caption{Representation of our experimental setup. Given the true activations up to time $T$ (first row), we train on different sets. $\mathcal{D}_i$ represents the set of all activated nodes up to time $i$ and $\hat{\mathcal{D}}_i$ represents the predicted activations. 
	}\label{fig:experiment}
	\vspace{-1.0em}
\end{figure}
Figure~\ref{fig:experiment} shows the general experimental setup.
}

\yesappendix{
\begin{figure*}[!ht]
	\centering
    \begin{subfigure}{\legendsize}
        \includegraphics[width=\textwidth]{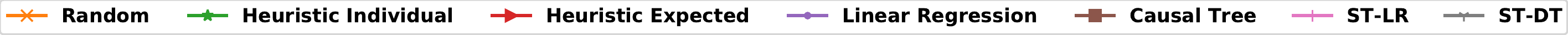}    
    \end{subfigure}

	\begin{subfigure}[b]{\subfiguresize}
		\includegraphics[width=\textwidth]{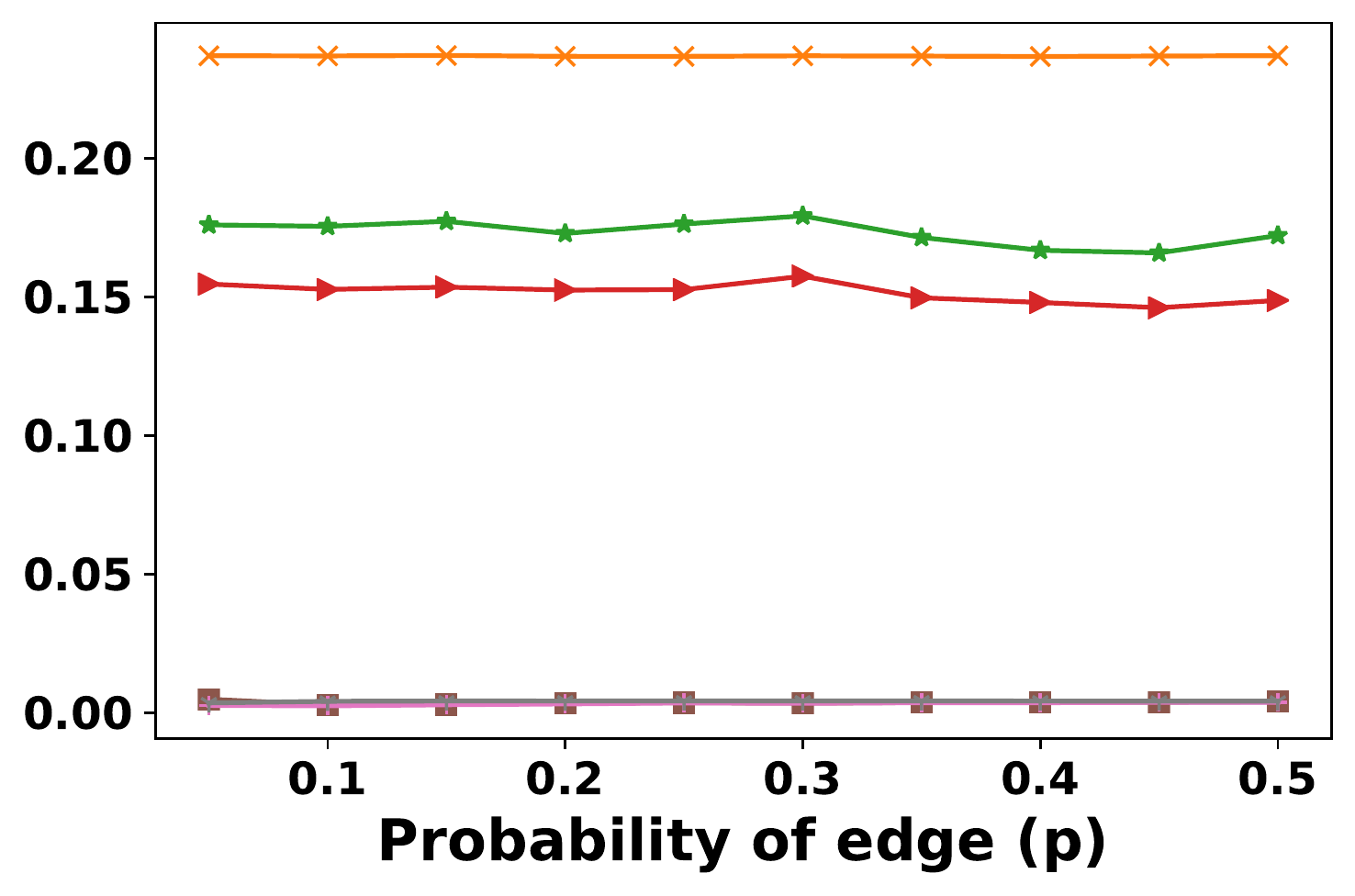}
		\caption{Erdos-Renyi}\label{fig:linear-erdos}
	\end{subfigure}
	\begin{subfigure}[b]{\subfiguresize}
		\includegraphics[width=\textwidth]{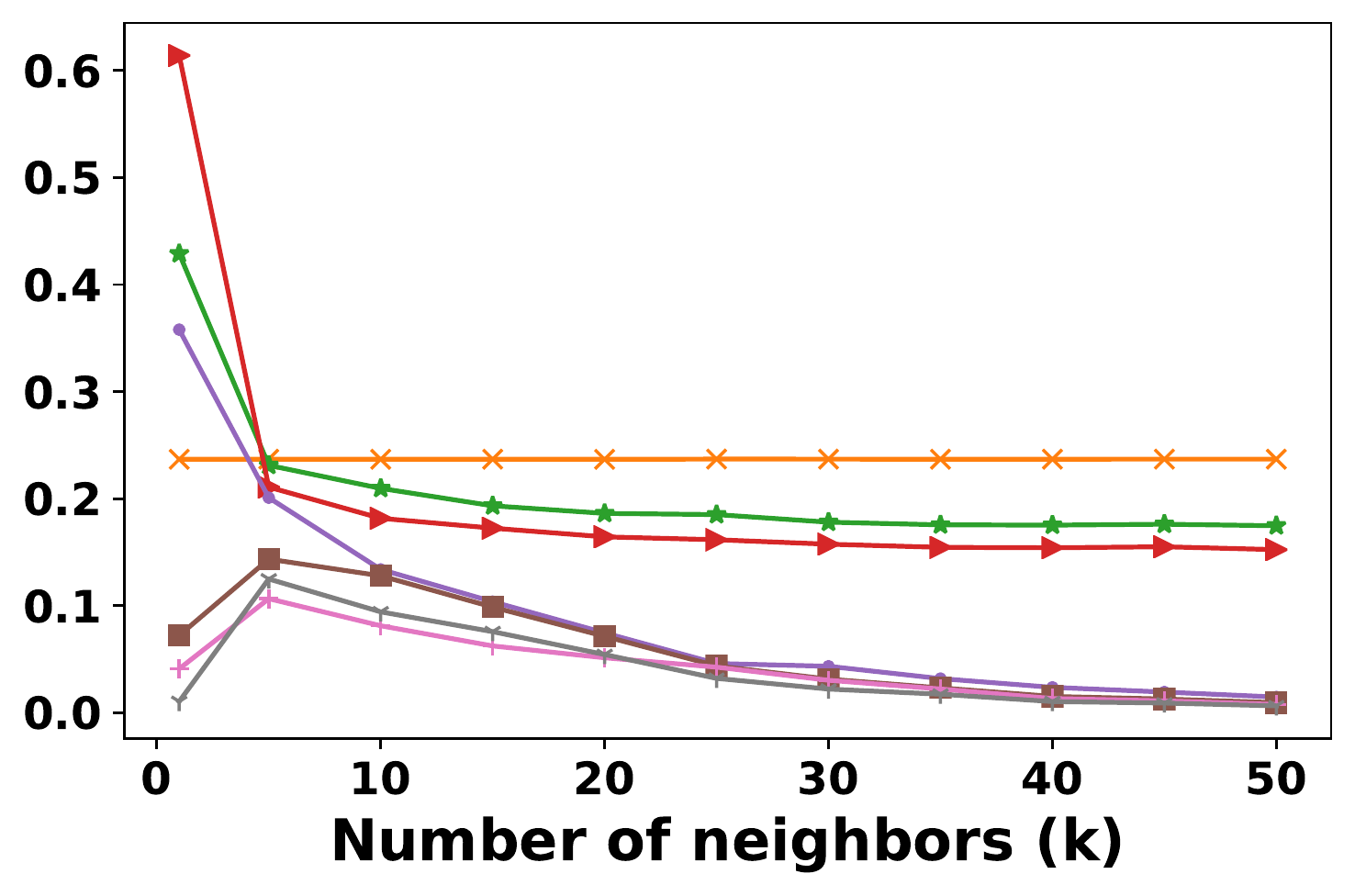}
		\caption{Pref. attachment}\label{fig:linear-pref}
	\end{subfigure}
	\begin{subfigure}[b]{\subfiguresize}
		\includegraphics[width=\textwidth]{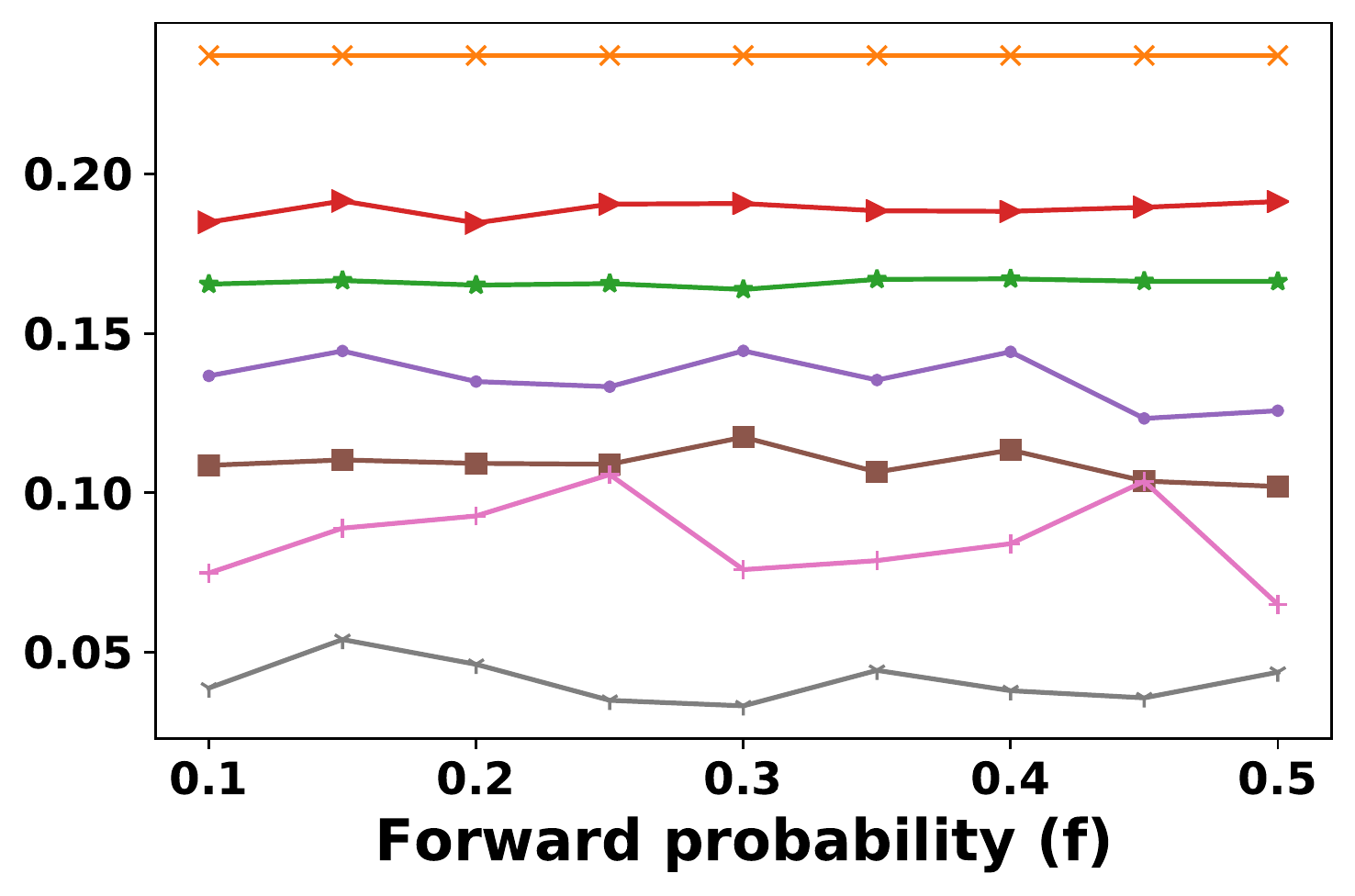}
		\caption{Forest fire}\label{fig:linear-forest}
	\end{subfigure}
	\begin{subfigure}[b]{\subfiguresize}
		\includegraphics[width=\textwidth]{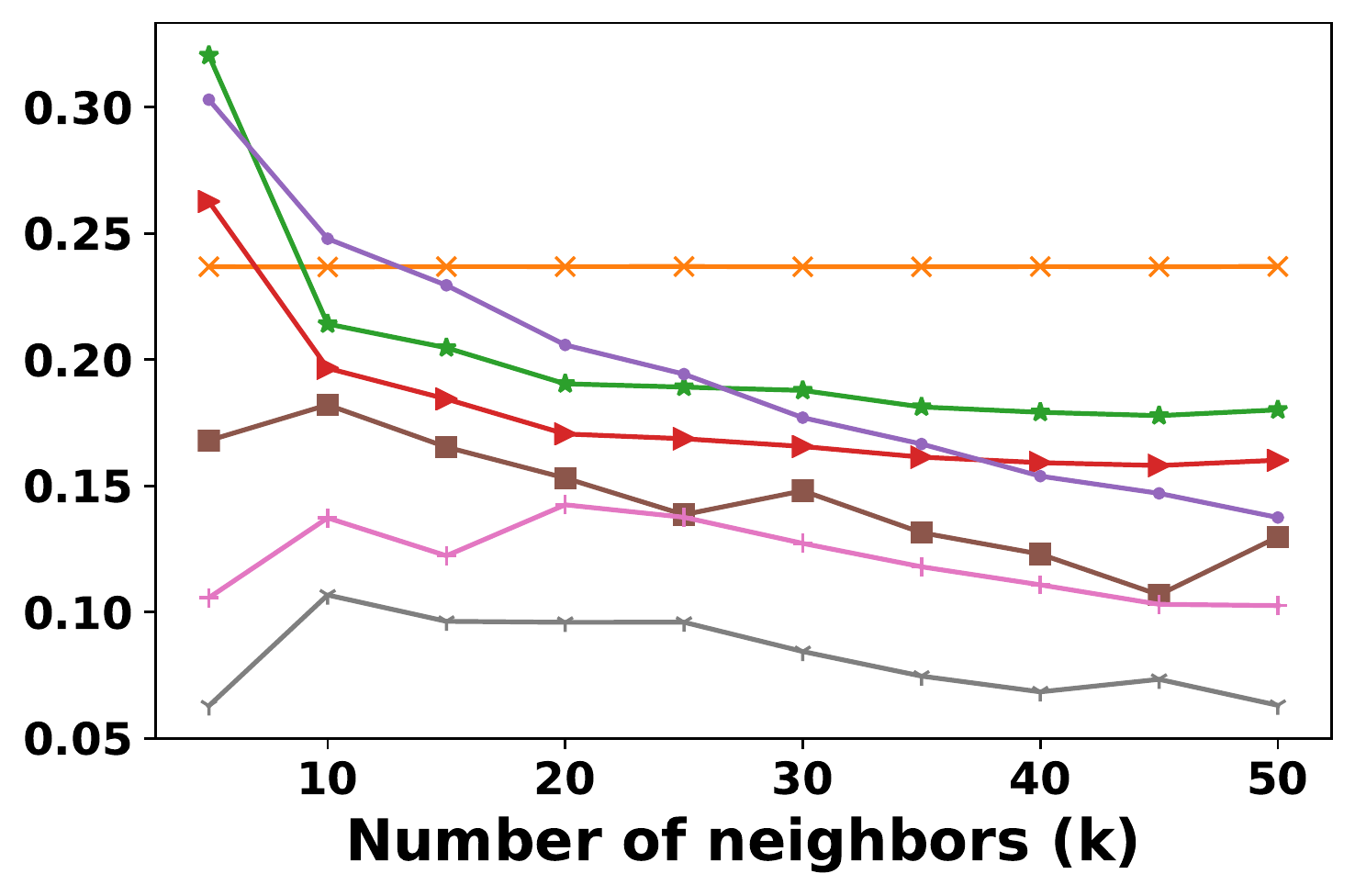}
		\caption{Watts-Strogatz}\label{fig:linear-watts}
	\end{subfigure}

	\caption{\kdd{MSE of threshold prediction for the \emph{Linear} setups over different parameters.}}\label{fig:synthetic_mse}
	\vspace{\myspace}
\end{figure*}
}
\noappendix{
\begin{figure*}[!ht]
	\centering
    \begin{subfigure}{\legendsize}
        \includegraphics[width=\textwidth]{figures/legend_cut.pdf}    
    \end{subfigure}

	\begin{subfigure}[b]{\subfiguresize}
		\includegraphics[width=\textwidth]{figures/synthetic/linear/erdos-renyi.pdf}
		\caption{Erdos-Renyi}\label{fig:linear-erdos}
	\end{subfigure}
	\begin{subfigure}[b]{\subfiguresize}
		\includegraphics[width=\textwidth]{figures/synthetic/linear/pref-attach.pdf}
		\caption{Pref. attachment}\label{fig:linear-pref}
	\end{subfigure}
	\begin{subfigure}[b]{\subfiguresize}
		\includegraphics[width=\textwidth]{figures/synthetic/linear/forest-fire.pdf}
		\caption{Forest fire}\label{fig:linear-forest}
	\end{subfigure}
	\begin{subfigure}[b]{\subfiguresize}
		\includegraphics[width=\textwidth]{figures/synthetic/linear/small-world.pdf}
		\caption{Watts-Strogatz}\label{fig:linear-watts}
	\end{subfigure}

	\begin{subfigure}[b]{\subfiguresize}
		\includegraphics[width=\textwidth]{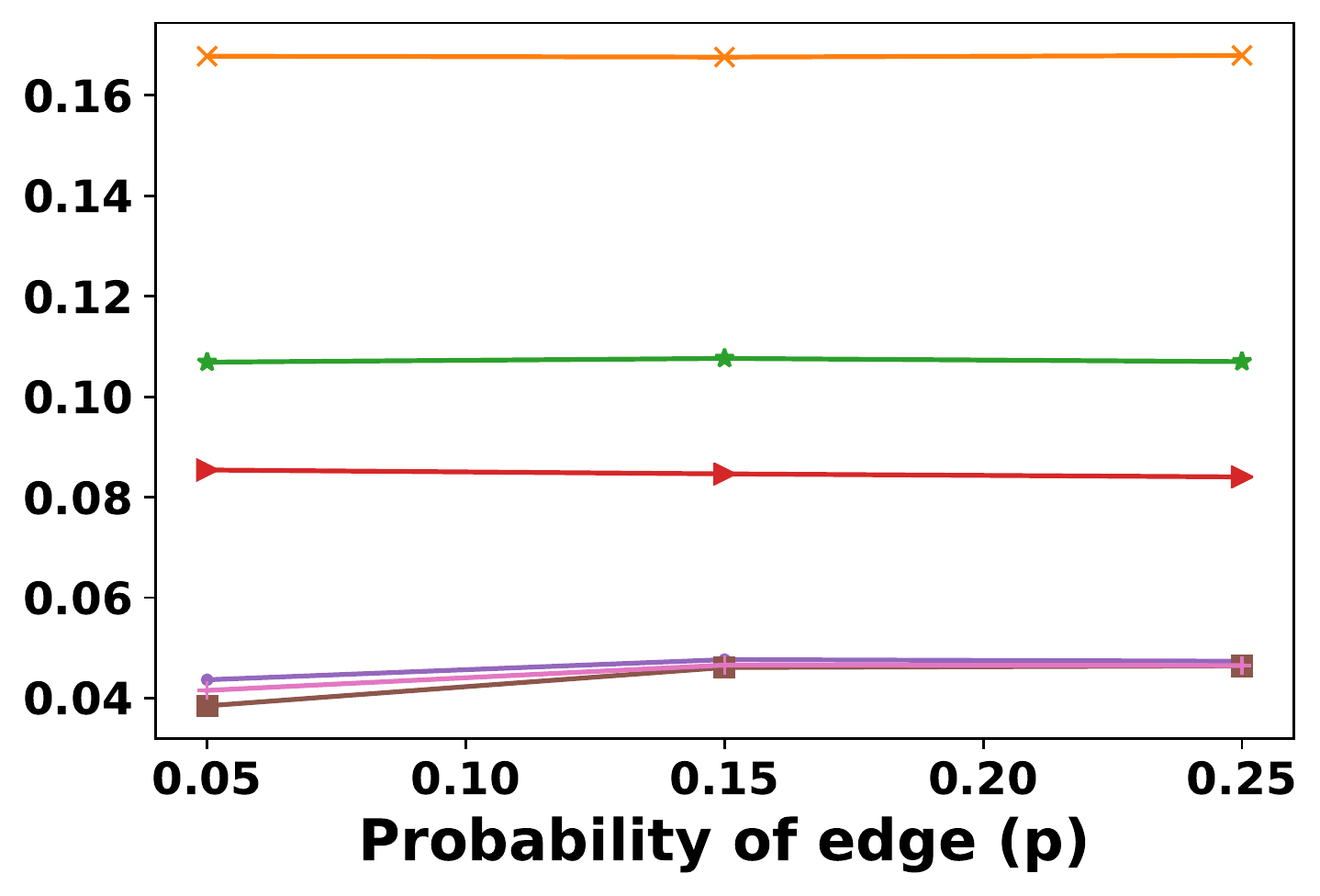}
		\caption{Erdos-Renyi}\label{fig:quad-erdos}
	\end{subfigure}
	\begin{subfigure}[b]{\subfiguresize}
		\includegraphics[width=\textwidth]{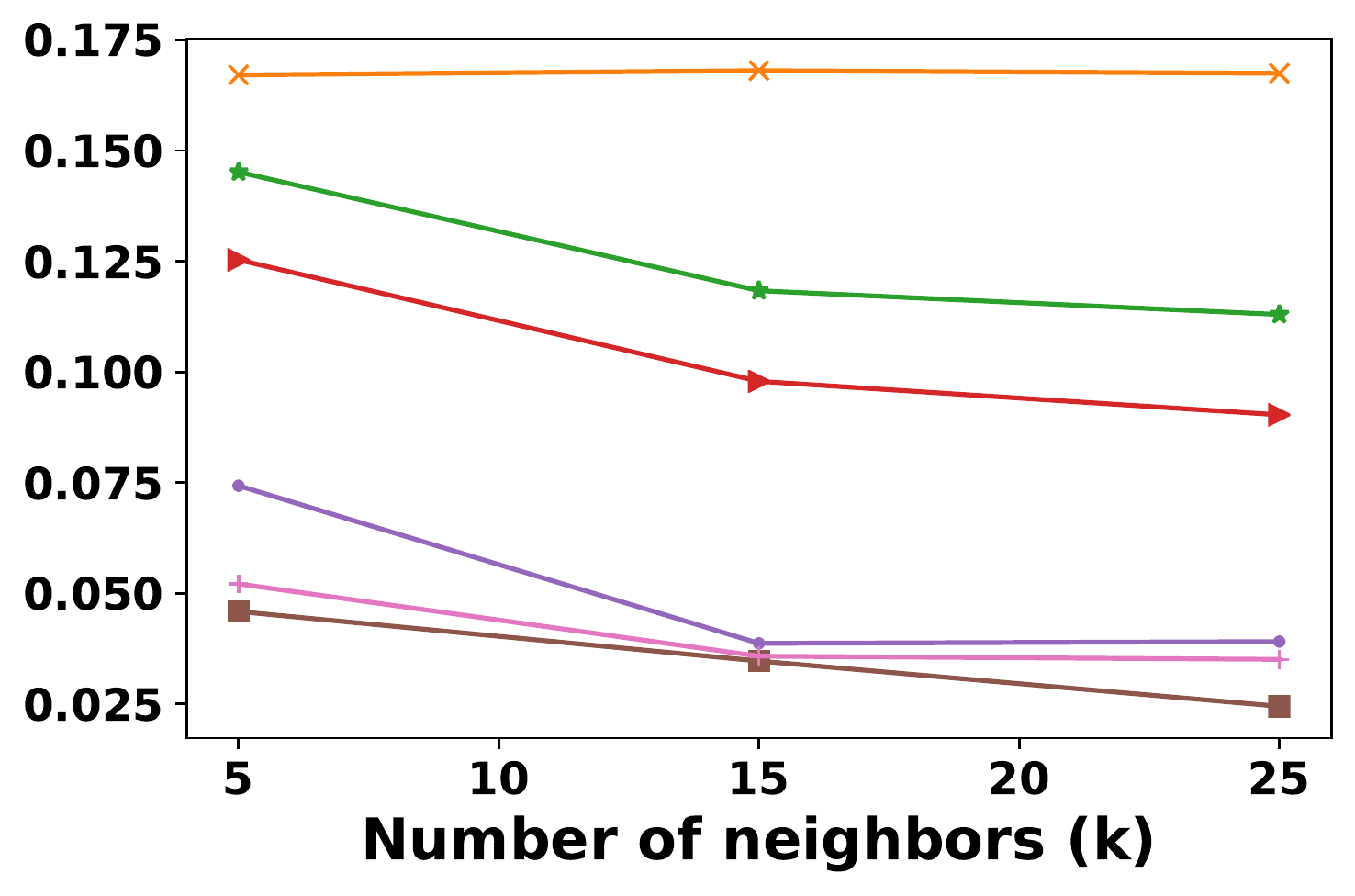}
		\caption{Pref. attachment}\label{fig:quad-pref}
	\end{subfigure}
	\begin{subfigure}[b]{\subfiguresize}
		\includegraphics[width=\textwidth]{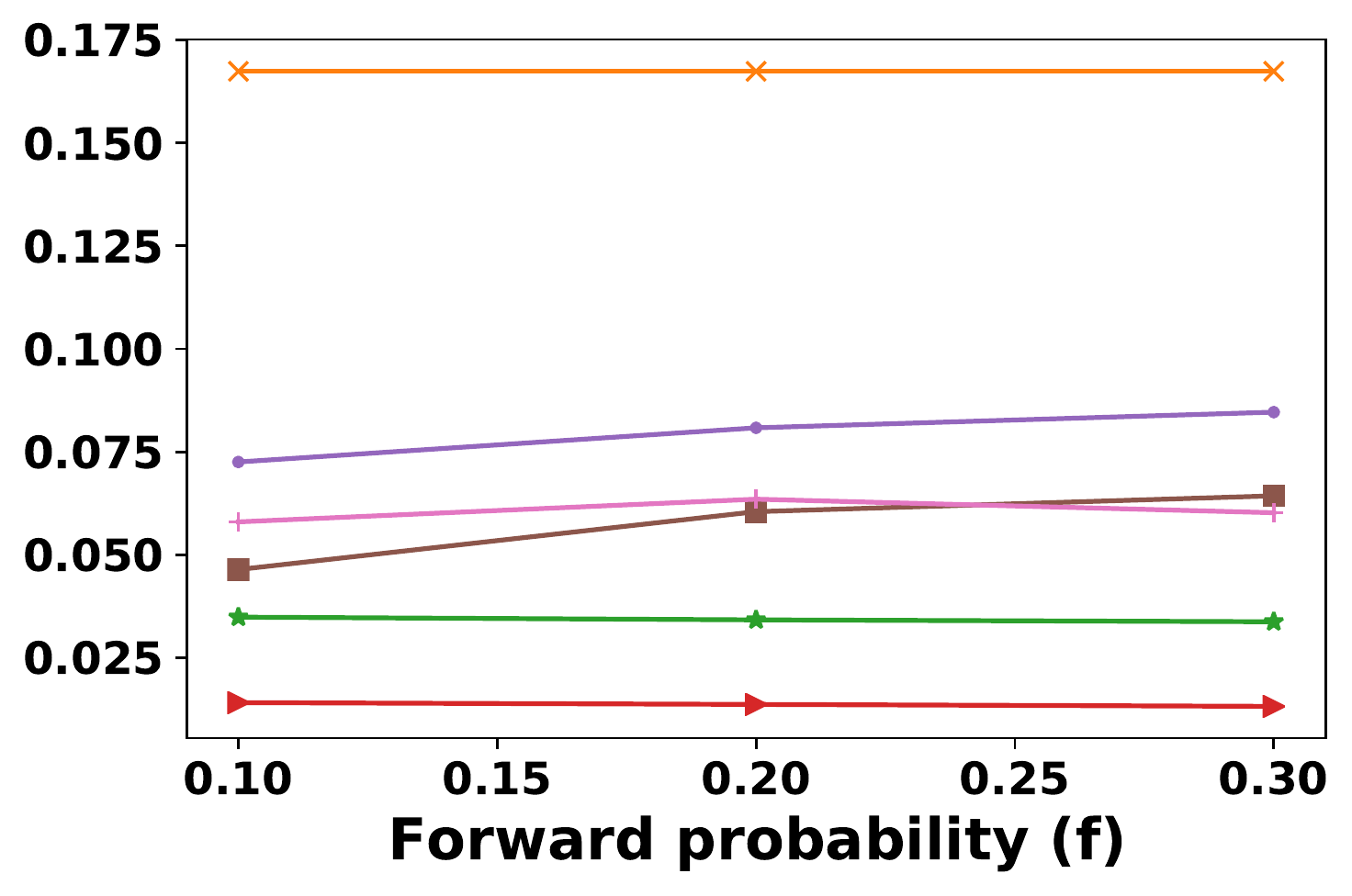}
		\caption{Forest fire}\label{fig:quad-forest}
	\end{subfigure}
	\begin{subfigure}[b]{\subfiguresize}
		\includegraphics[width=\textwidth]{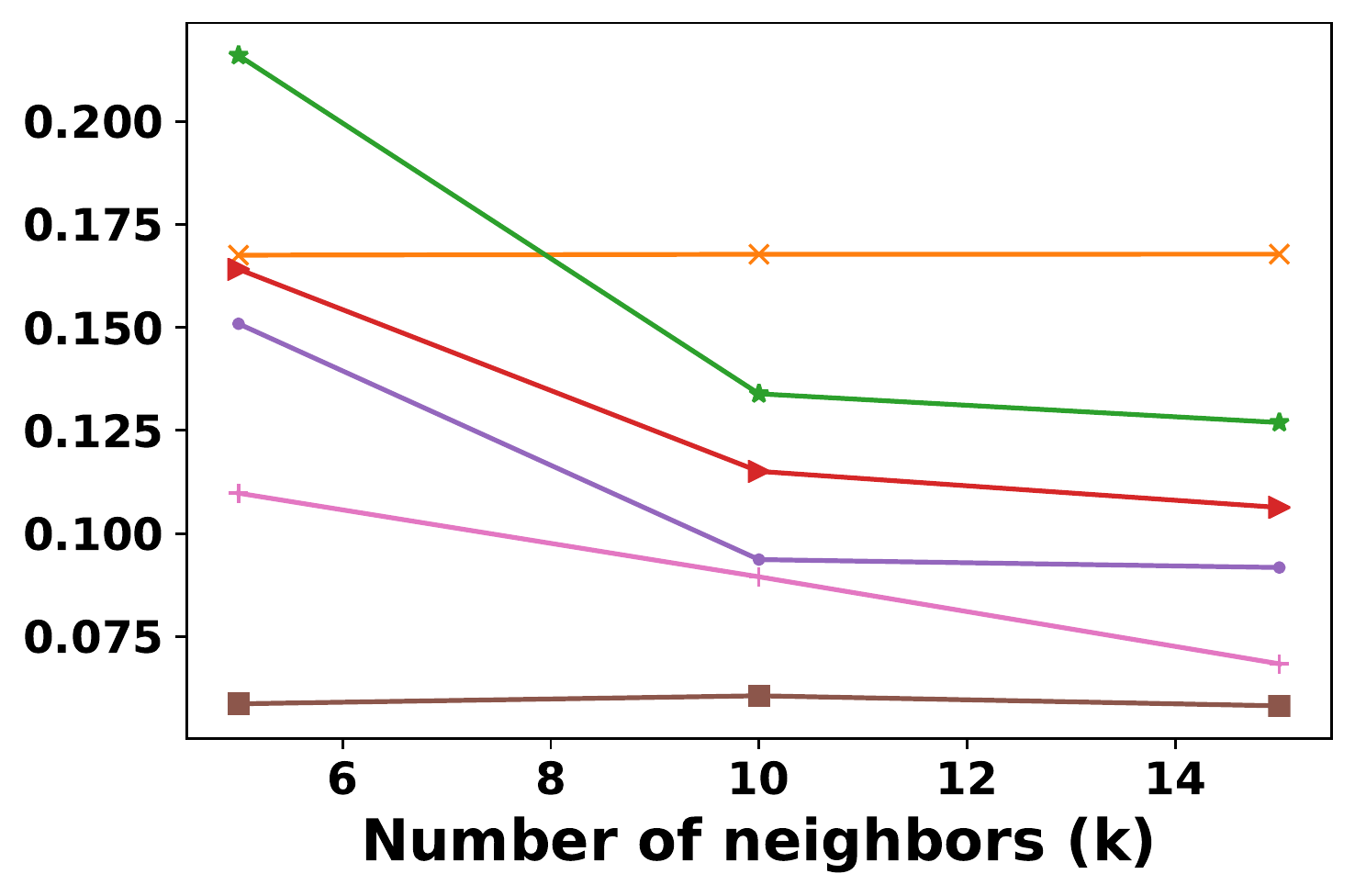}
		\caption{Watts-Strogatz}\label{fig:quad-watts}
	\end{subfigure}
	\caption{\kdd{MSE of threshold prediction for the \emph{Linear} (top row) and \emph{Quadrant} (bottom row) setups over different parameters.}}\label{fig:synthetic_mse_appendix}
\end{figure*}
}

We use four baseline threshold estimation methods, two of which are based on a recent LTM survey~\cite{talukder2019threshold}.
\emph{Heuristic Expected} computes the same threshold for all nodes in the network.
\emph{Heuristic Individual} estimates a range of values and samples individual node thresholds randomly from that range.
We also employ a baseline called \emph{Random}, that assign node thresholds uniformly random from 0 to 1.
In addition, we use a more practical baseline for threshold estimation using \emph{Linear Regression}.
To get labels for Linear Regression, we take all activated nodes in the network and estimate their threshold by computing the proportion of activated neighbors.
All baseline methods focus on node threshold prediction for LTM\@.

We compare \emph{Causal Tree} and \emph{ST-Learner} to the baseline methods.
\aaaipre{
For \emph{ST-Learner}, we use two base learners, Linear Regression (ST-LR) and Decision Trees (ST-DT).
}
For neighbor influences, we use degree centrality: $w_{uv} = 1/|N(v)|$, where $|N(v)|$ is the degree of node $v$.

\subsubsection{Task 1: node threshold prediction}
For the first task, the goal is to accurately predict individual node thresholds.
Given a snapshot of the data at time $t$, we estimate a threshold for each node.
To evaluate the effectiveness of each model, we use the average mean squared error of predicted thresholds across all snapshots.
\begin{equation}
	\text{MSE} = \frac{1}{T \cdot|V|} \sum_{t=1}^T \sum_{v}^{|V|} \Big(\theta_v - \hat{\theta}_v^{(t)} \Big)^2.
\end{equation}
This task is evaluated based on synthetic data because the true thresholds are unknown in real-world data.

\subsubsection{Task 2: activated node prediction}
In the second task, we are interested in how well each model can predict which specific nodes activate at each time step.
To do this, we use the predicted node thresholds and simulate a diffusion according to LTM\@.
Let $\mathcal{D}_t$ be the set of activated nodes at time $t$ and $\hat{\mathcal{D}}_t$ be the set of predicted activated nodes at time $t$.
We compute the average Jaccard index on all sets as:
\begin{equation}
    J = \frac{1}{T} \sum_{t=0}^T \frac{|\mathcal{D}_t \cap \hat{\mathcal{D}}_t|}{|\mathcal{D}_t \cup \hat{\mathcal{D}}_t|}.
\end{equation}
The Jaccard index shows how accurate a model's estimated thresholds are for predicting activated nodes.

\subsubsection{Task 3: diffusion size prediction}
The final task is diffusion size prediction.
Using the simulated diffusion, we plot the number of predicted activations and compare them to the true number of activations at each time step.
This task shows how node threshold prediction impacts diffusion size prediction.
We refer to the diffusion size as the \emph{reach}.

\subsection{Datasets}\label{sec:datasets}

\begin{table*}[!ht]
	\centering
	\caption{Our models predict the specific nodes that will activate most accurately based on highest average Jaccard index.}\label{tab:acc}
	\resizebox{0.85\textwidth}{!}{
		\begin{tabular}{P{3.0cm} *{13}{M{1.15cm}}}
			\toprule
			& \multicolumn{4}{c}{Linear setup}
			& \multicolumn{4}{c}{Quadrant setup}
			& \multicolumn{4}{c}{Real-world data}
			\\
			\cmidrule(lr){2-5}
			\cmidrule(lr){6-9}
			\cmidrule(lr){10-13}
			Threshold prediction method 
			& Erdos-Renyi & Pref. Attach. & Forest Fire & Watts-Strogatz %
			& Erdos-Renyi & Pref. Attach. & Forest Fire & Watts-Strogatz %
			& Hateful 2016 & Hateful 2017 & Cannabis & Higgs %
			\\
			\midrule
			Random 
			& 0.6981 & 0.7986 & 0.9094 & 0.8082 %
			& 0.9830 & 0.8319 & 0.9187 & 0.7877 %
			& 0.2450 & 0.2262 & 0.2033 & 0.6287 %
			\\
			
			Heuristic Expected 
			& 0.6268 & 0.7415 & 0.9649 & 0.7557 %
			& 0.9783 & 0.7736 & 0.9630 & 0.7082 %
			& 0.2608 & 0.2300 & 0.2433 & 0.6429 %
			\\
			
			Heuristic Individual 
			& 0.9127 & 0.8728 & 0.9103 & 0.8398 %
			& 0.9785 & 0.7801 & 0.9692 & 0.7196 %
			& 0.2617 & 0.2301 & 0.2691 & 0.6436 %
			\\
			
			Linear Regression 
			& 0.9082 & 0.8589 & 0.9454 & 0.8306 %
			& 0.9891 & 0.8532 & 0.9494 & 0.7940 %
			& 0.1522 & 0.2297 & 0.3674 & 0.6449 %
			\\
			
			\midrule

			Causal Tree 
			& 0.9599 & 0.9357 & 0.9509 & 0.8383 %
			& 0.9978 & 0.9559 & 0.9638 & 0.8542 %
			& \bf0.7207 & 0.5793 & \bf0.7830 & 0.9427 %
			\\
			
			ST-Learner 
			& 0.9658 & 0.9471 & 0.9622 & 0.8618 %
			& 0.9972 & 0.9508 & 0.9526 & 0.8059 %
			& 0.6877 & 0.5785 & 0.7564 & 0.9457 %
			\\

			ST-Learner (Tree)
			& \bftab 0.9813 & \bftab 0.9777 & \bftab 0.9636 & \bftab 0.8669 %
			& \bftab 0.9983 & \bftab 0.9801 & \bftab 0.9663 & \bftab 0.8672 %
			& 0.7128 & \bftab 0.5988 & 0.7811 & \bftab 0.9734 %
			\\
			\bottomrule
			\\
		\end{tabular}
	}
	\vspace{\myspace}
\end{table*}

We evaluate task 1 under multiple synthetic network scenarios and tasks 2 and 3 with three real-world datasets.

\textbf{Synthetic datasets.}
We study four graph generation models: Erdos-Renyi~\cite{erdos-graph60}, preferential attachment~\cite{barabasi-science99}, forest fire~\cite{leskovec-kdd05}, and  Watts-Strogatz~\cite{watts-nature98} and two threshold generation models. 
For each set of \{network generation model, network parameter, threshold generation model\}, we run 10 simulations and report average results.
We set the number of nodes to 1000, and for each node, we randomly generate 100 node attributes from a Gaussian, $N(0, 1)$.

In Erdos-Renyi, we vary the probability of edge creation $p$ from $0.05$ to $0.5$.
In preferential attachment, the number of new attachments $k$ from $1$ to $50$.
For forest fire networks, we fix the backward probability of an edge to $0.1$ and vary the forward probability of an edge $f$ from $0.05$ to $0.5$.
For the Watts-Strogatz networks, we fix the probability of rewiring an edge to $0.1$, and vary nearest neighbors $k$ from $1$ to $50$.

In the first threshold generation, we use a random linear regression model with $10$ of $100$ attributes and normalize the output to be the user threshold, called the \emph{Linear} setup.
In the second threshold generation, we use $2$ features to separate thresholds into four quadrants, where each quadrant receives a single threshold uniformly from $U(0, 1)$, called the \emph{Quadrant} setup.
50 nodes are randomly activated, and diffusion events are generated based on LTM for $8$ time steps.

\textbf{Hateful Users.}
The Hateful Users dataset is a retweet network from Twitter, with 200 most recent tweets for each user~\cite{ribeiro2018characterizing}.
Each user is represented by the average \emph{Empath} category based on their tweets~\cite{fast2016empath}.
Empath captures a wide variety of topics such as violence, fear, and warmth.
A sample of users were selected to be annotated as \emph{hateful} or \emph{not hateful} and the rest were predicted based on the history of tweets using the methodology in~\cite{ribeiro2018characterizing}.

We estimate node thresholds for how ``hatefulness'' spreads through the network, where being activated means you change from \emph{not hateful} to \emph{hateful}.
We consider a month to month diffusion: how ``hatefulness'' diffuses on a month to month basis.
We look at two time periods: Jan 2016 to Dec 2016 and Jan 2017 to Oct 2017.

\textbf{Cannabis.}
The Cannabis dataset is a follower network.
The dataset covers all users who tweet about both cannabis and the e-cigarette Juul.
From the users who tweet about Juul, we identify those users who also tweet about cannabis or marijuana.
Empath categories are used as attributes.
We estimate thresholds for how cannabis tweets spread.
Activation means an individual tweets a cannabis related tweet.
We consider the period between Jan 2017 to Dec 2017.

\textbf{Higgs Boson.}
This dataset is based on the announcement of the Higgs-boson like particle at CERN on July 4, 2012.
The dataset was collected between July 1 and July 8 of 2012 and is a follower network on Twitter~\cite{de2013anatomy}.
We estimate node thresholds for the mention of the Higgs-boson discovery.
Activation means a user tweets about the Higgs-boson discovery.
There are no node attributes, so we construct features based on the graph.
We use degree centrality, both in-degree and out-degree, and counts of user and neighborhood tweets.
We consider hourly diffusions from July 4, 12:00am until July 8, 12:00am.

\yesappendix{
\begin{figure*}[!ht]
        \centering
        \begin{subfigure}{\legendsize}
            \includegraphics[width=\textwidth]{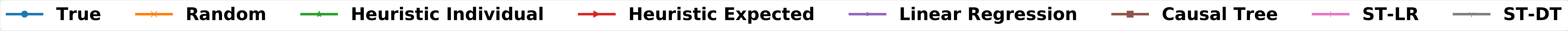}  
        \end{subfigure}
    
        \begin{subfigure}[t]{0.22\linewidth}
            \includegraphics[width=\textwidth]{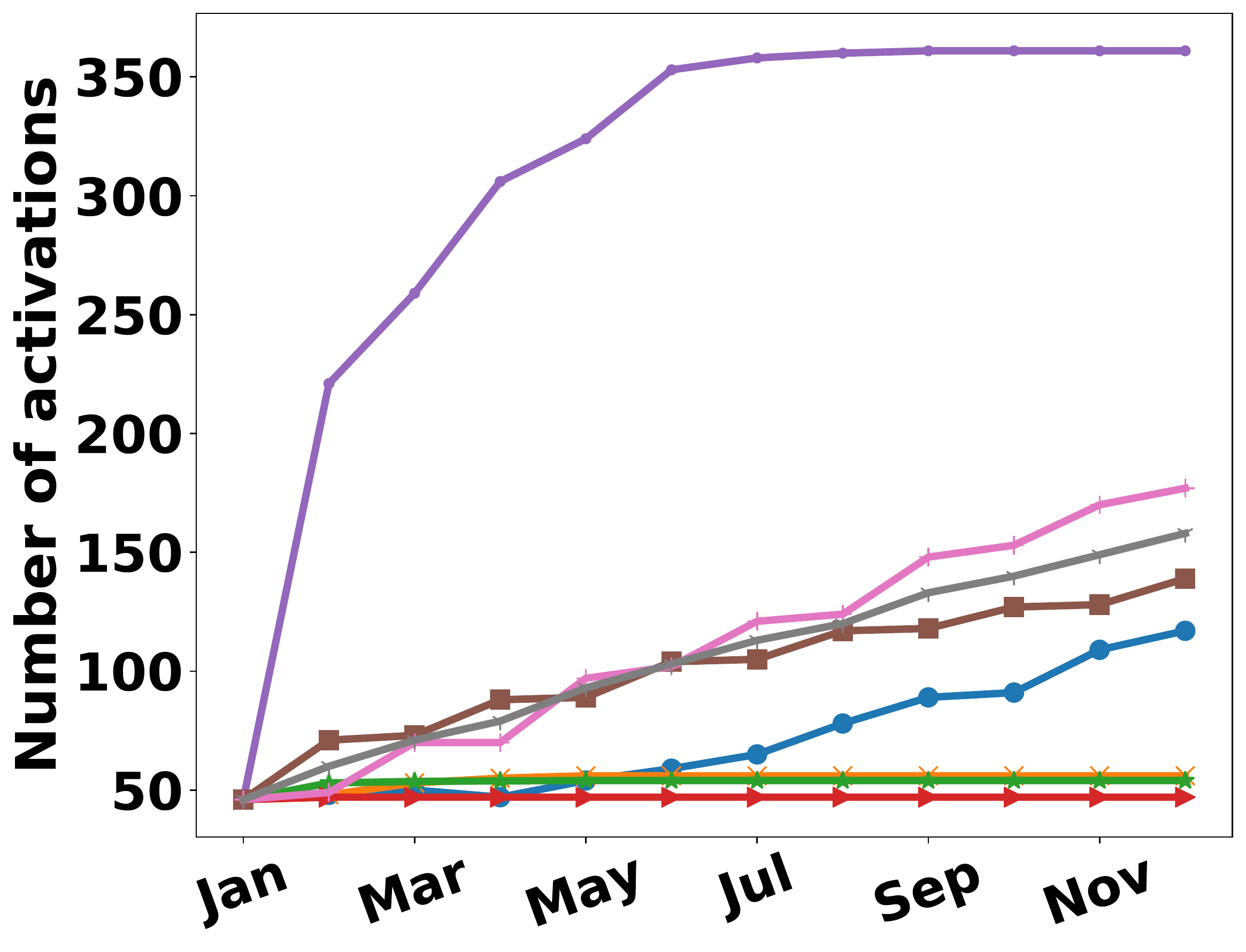}
            \caption{Hateful Users}\label{fig:hateful_main}
        \end{subfigure}
        \begin{subfigure}[t]{0.22\linewidth}
            \includegraphics[width=\textwidth]{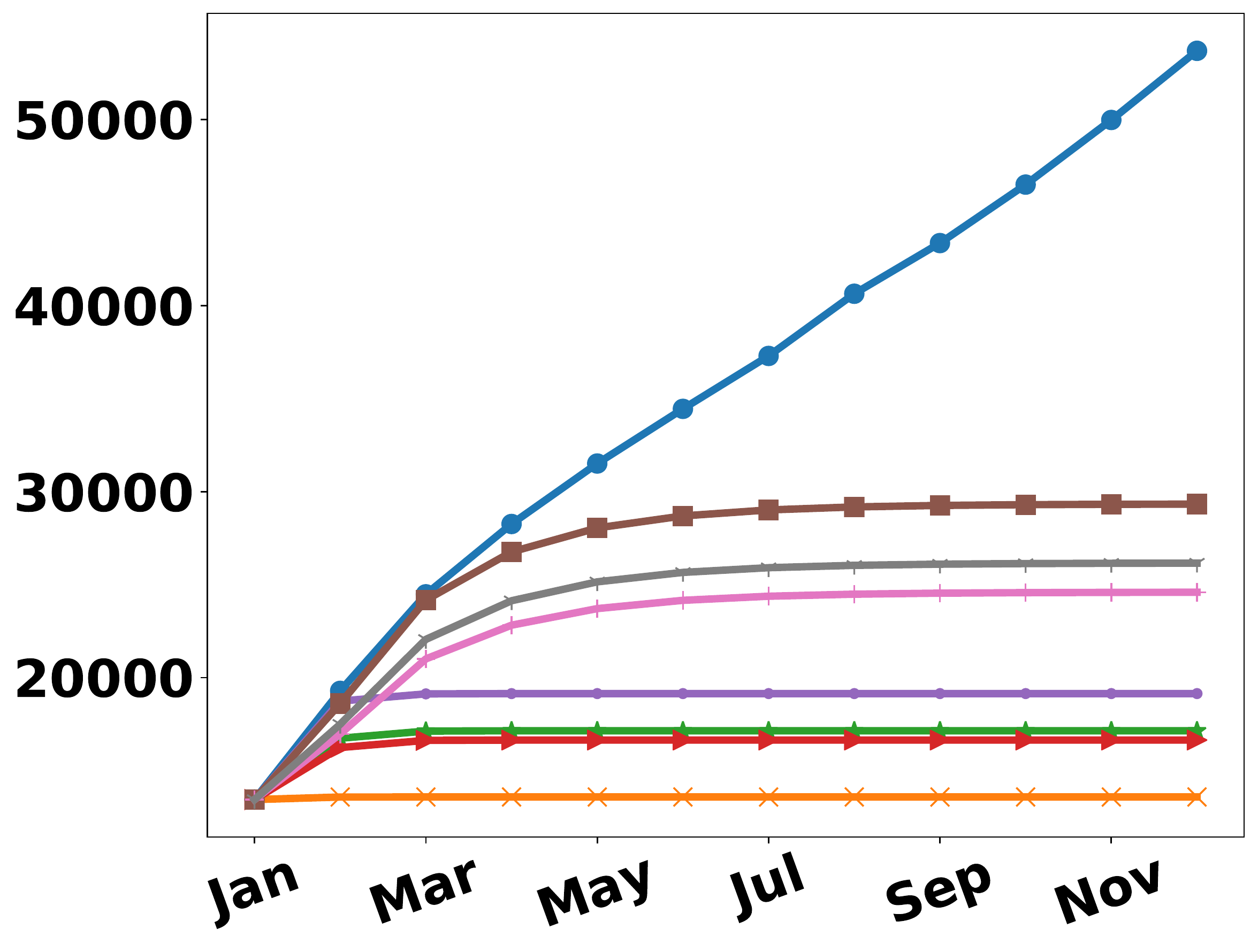}
            \caption{Cannabis}\label{fig:cannabis_main}
        \end{subfigure}
        \begin{subfigure}[t]{0.24\linewidth}
            \includegraphics[width=\textwidth]{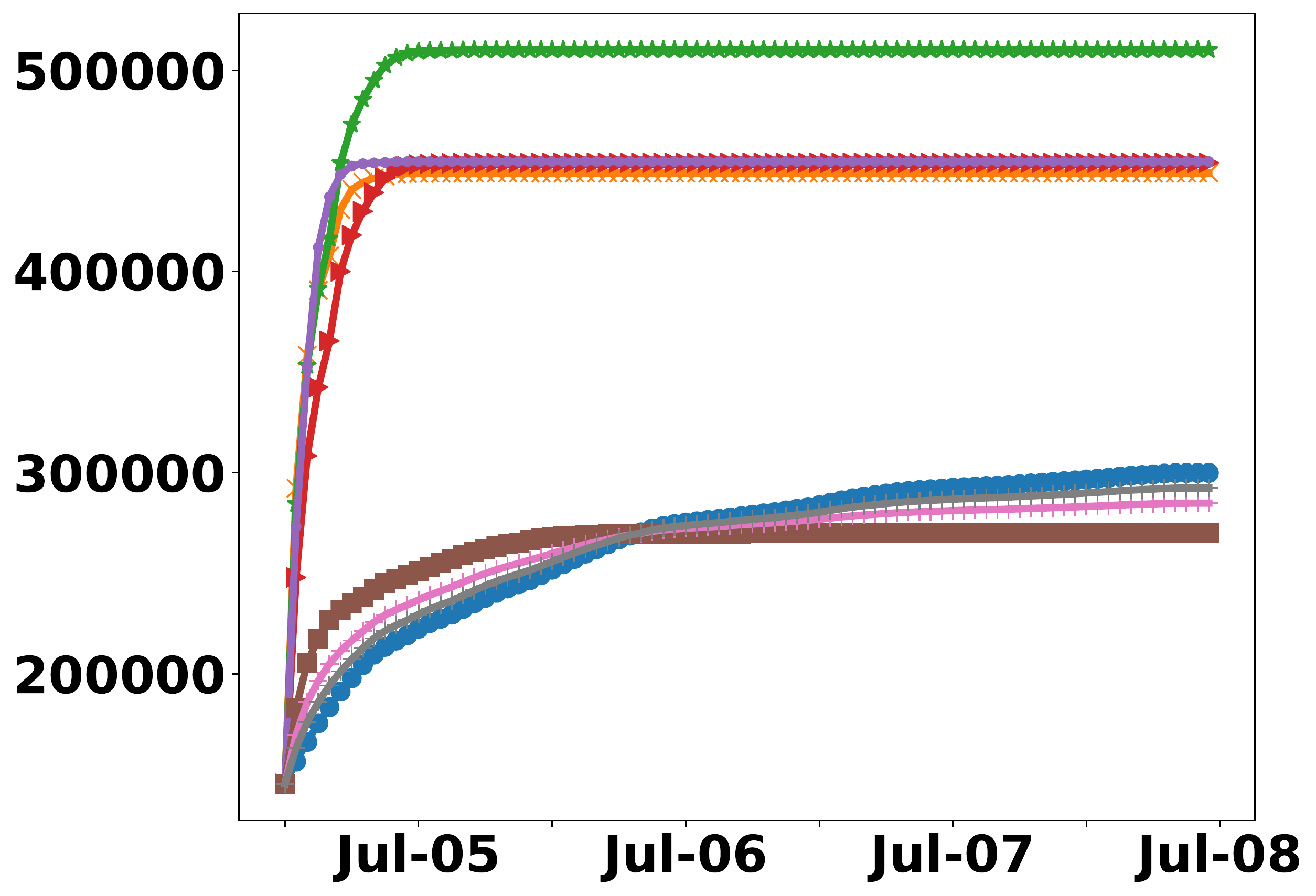}
            \caption{Higgs}\label{fig:higgs_main}
        \end{subfigure}
        \caption{
        Comparison of diffusion size prediction on three real world datasets. 
        Our models have the closest estimation of reach over longer time periods whereas the baselines incorrectly predict diffusion saturation in the early stages. 
        }\label{fig:real_compare}
        \vspace{\myspace}
    \end{figure*}

}
\noappendix{

\begin{figure*}[!ht]
    \centering
    \begin{subfigure}{\legendsize}
        \includegraphics[width=\textwidth]{figures/legend.pdf}  
    \end{subfigure}

    \begin{subfigure}[t]{\subfigurefirst}
        \includegraphics[width=\textwidth]{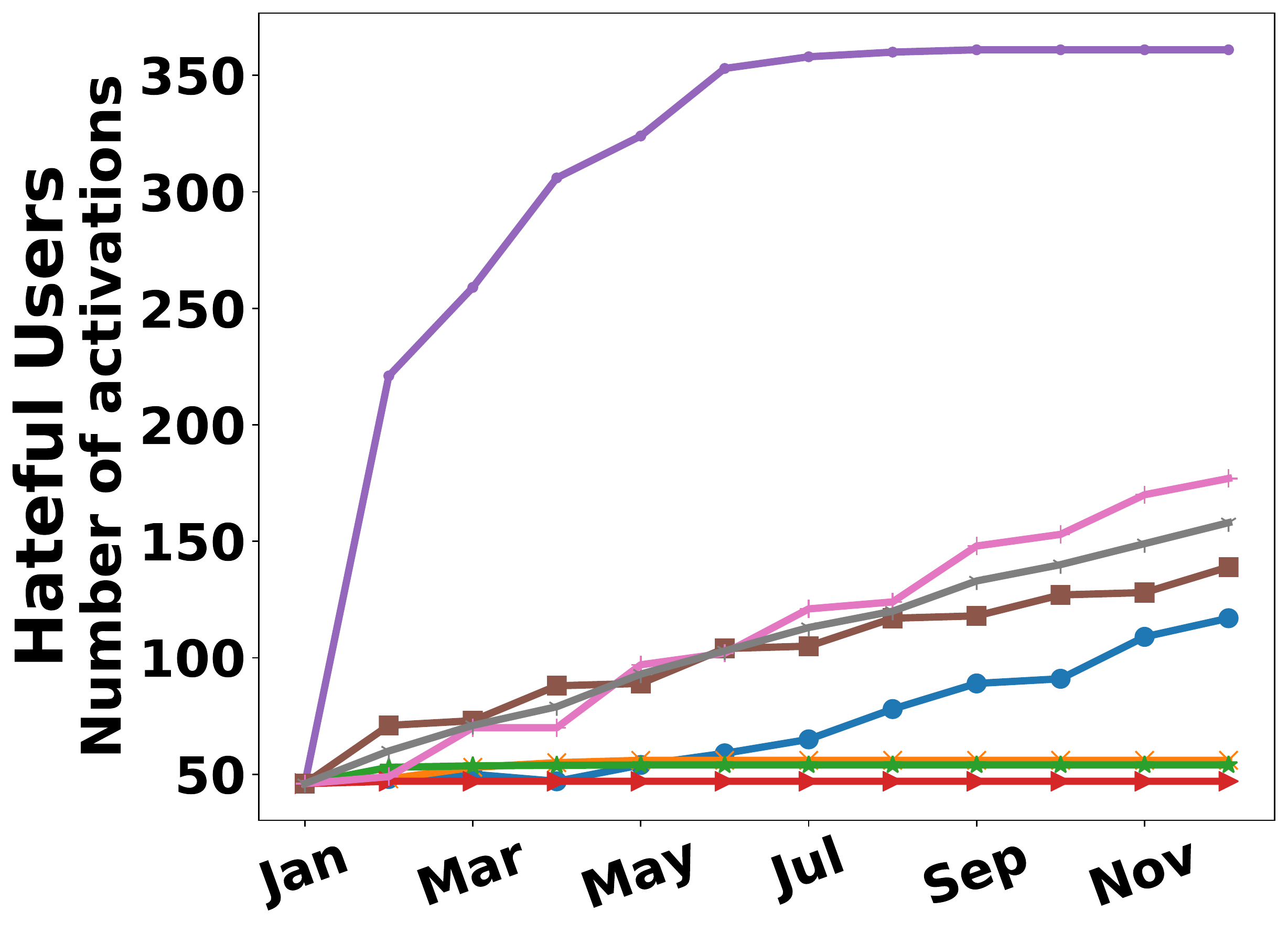}
        \caption{}
        \label{fig:hateful_compare_2016-0}
    \end{subfigure}
    \begin{subfigure}[t]{\subfiguresize}
        \includegraphics[width=\textwidth]{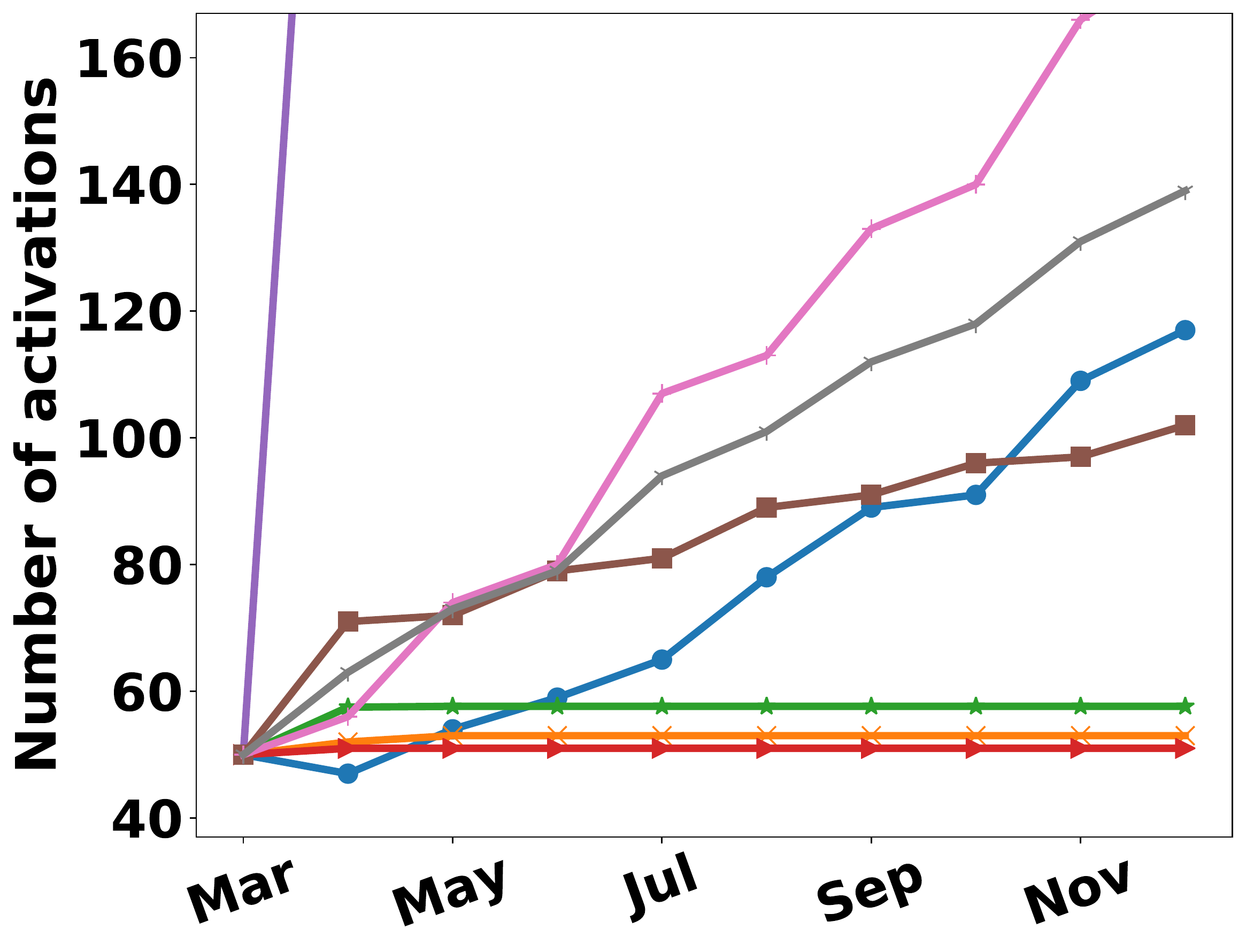}
        \caption{}
        \label{fig:hateful_compare_2016-2}
    \end{subfigure}
    \begin{subfigure}[t]{\subfiguresize}
        \includegraphics[width=\textwidth]{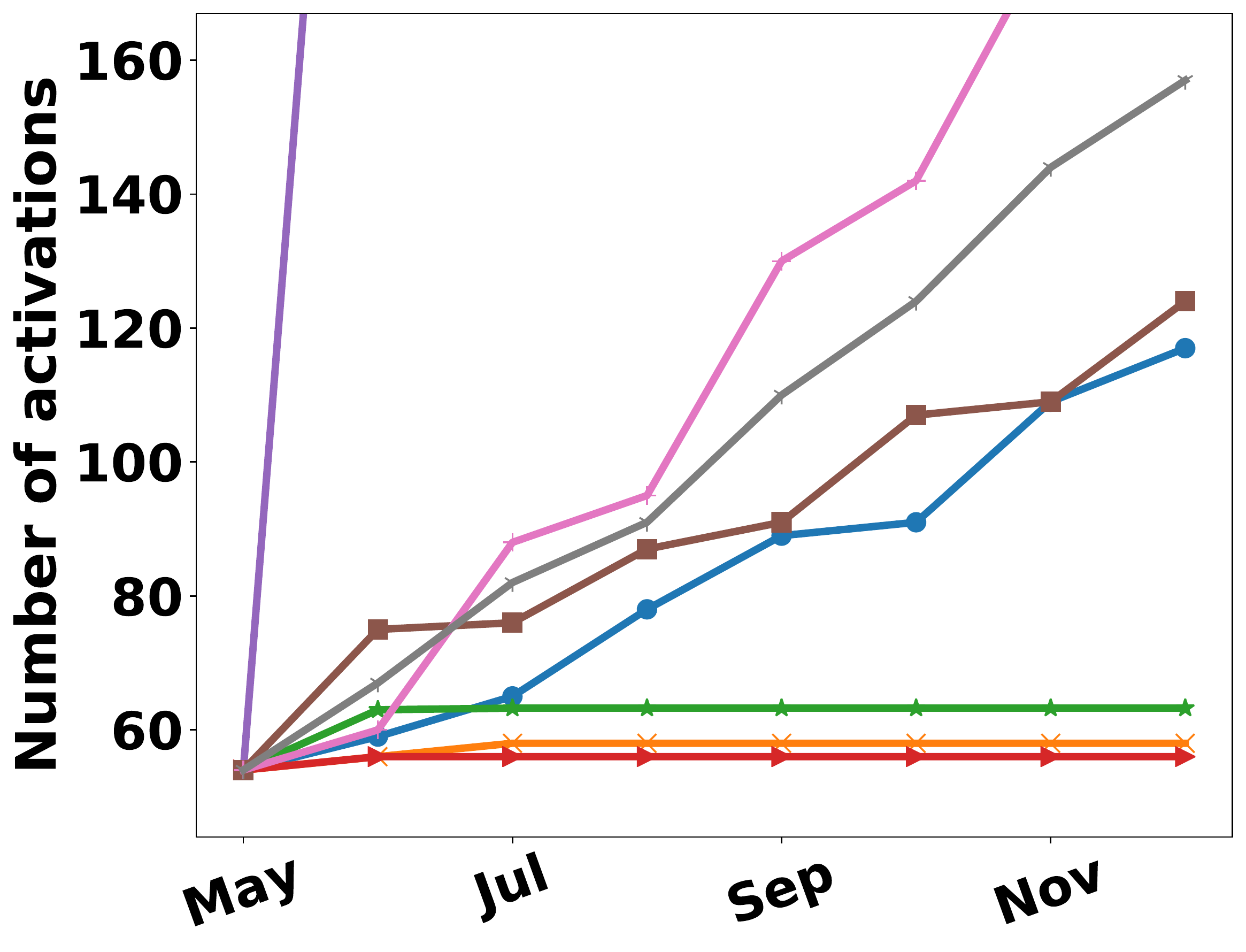}
        \caption{}
        \label{fig:hateful_compare_2016-4}
    \end{subfigure}
    \begin{subfigure}[t]{\subfiguresize}
        \includegraphics[width=\textwidth]{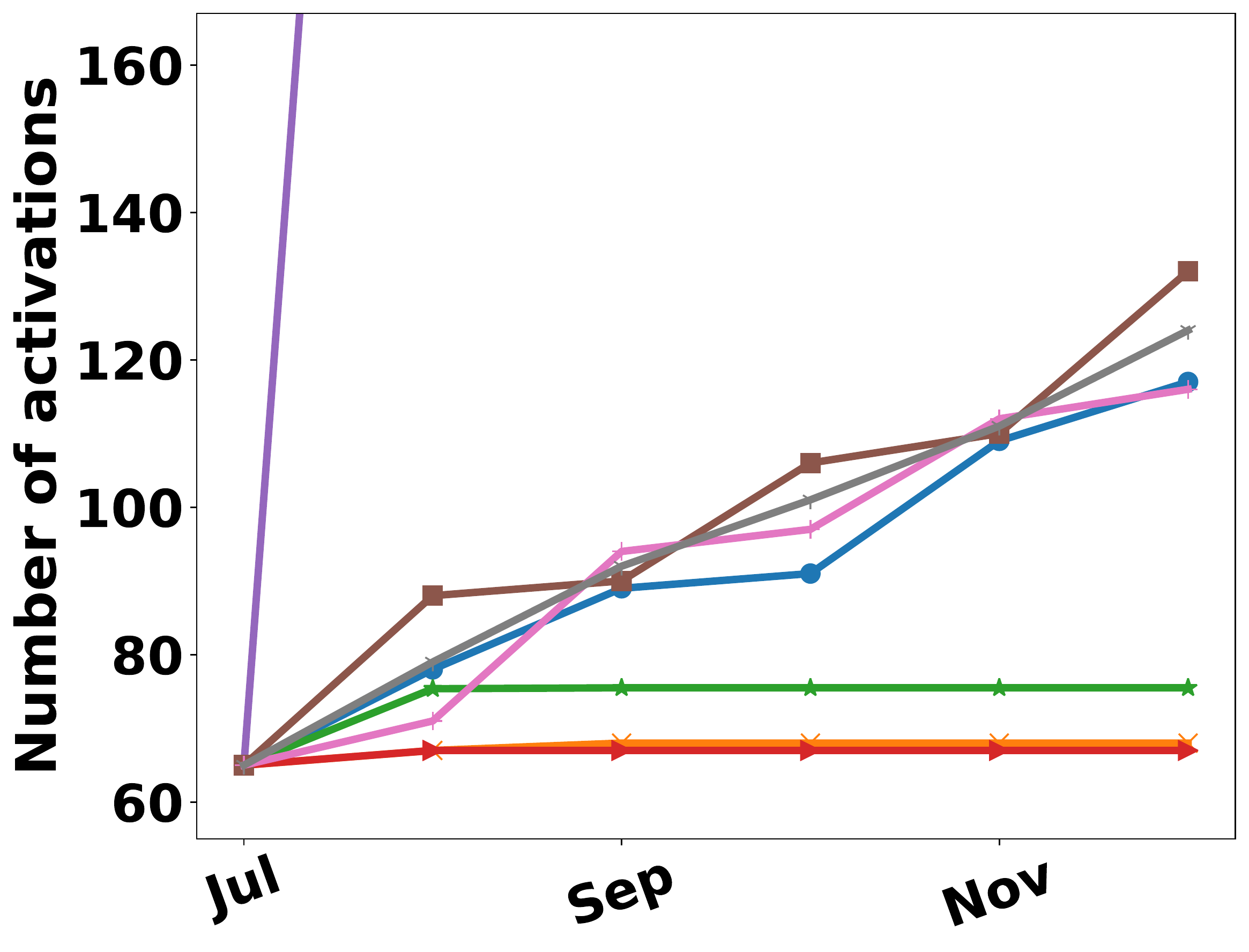}
        \caption{}
        \label{fig:hateful_compare_2016-6}
    \end{subfigure}
    
    \begin{subfigure}[t]{\subfigurefirst}
        \includegraphics[width=\textwidth]{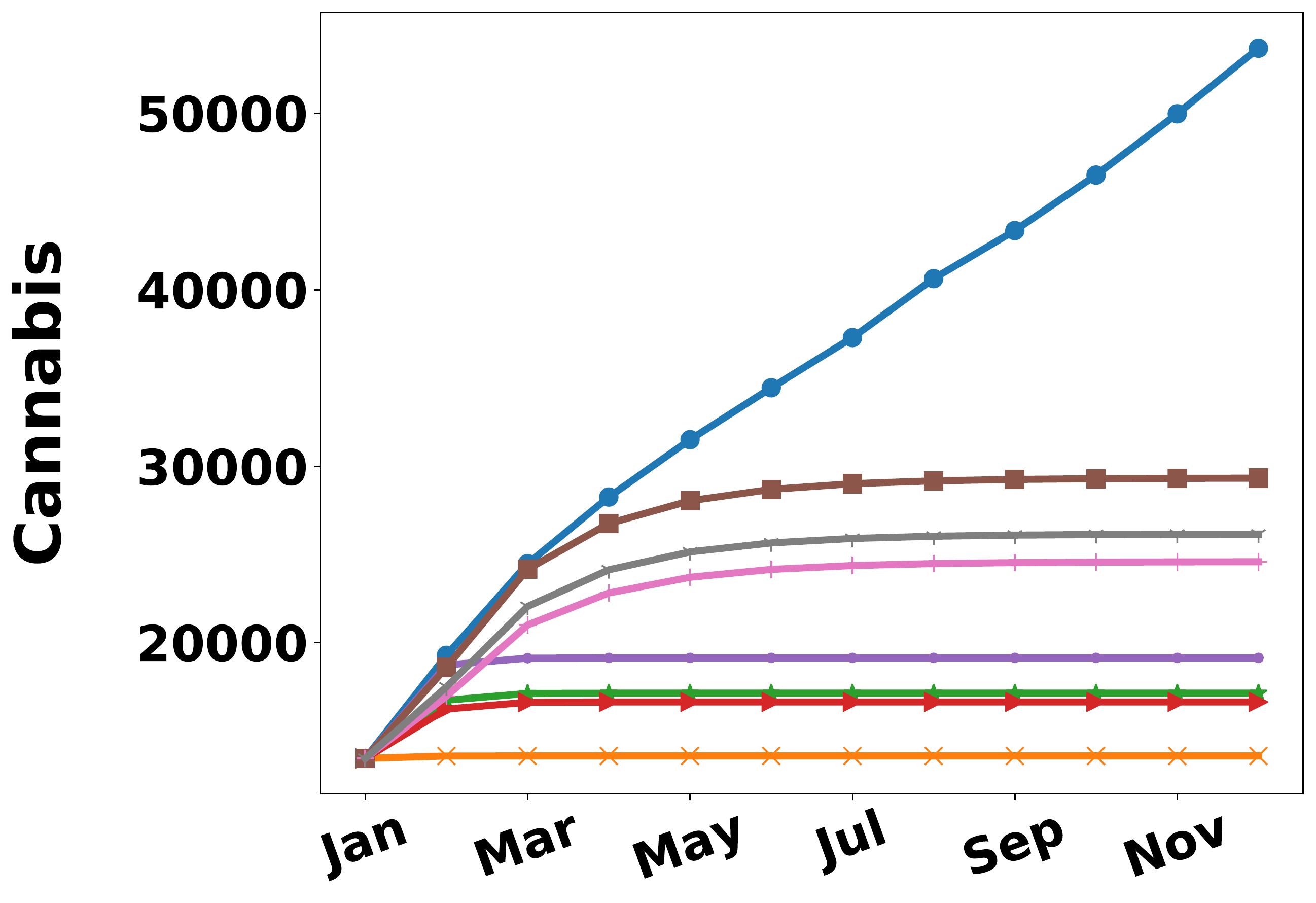}
        \caption{}
        \label{fig:cannabis_compare_0}
    \end{subfigure}
    \begin{subfigure}[t]{\subfiguresize}
        \includegraphics[width=\textwidth]{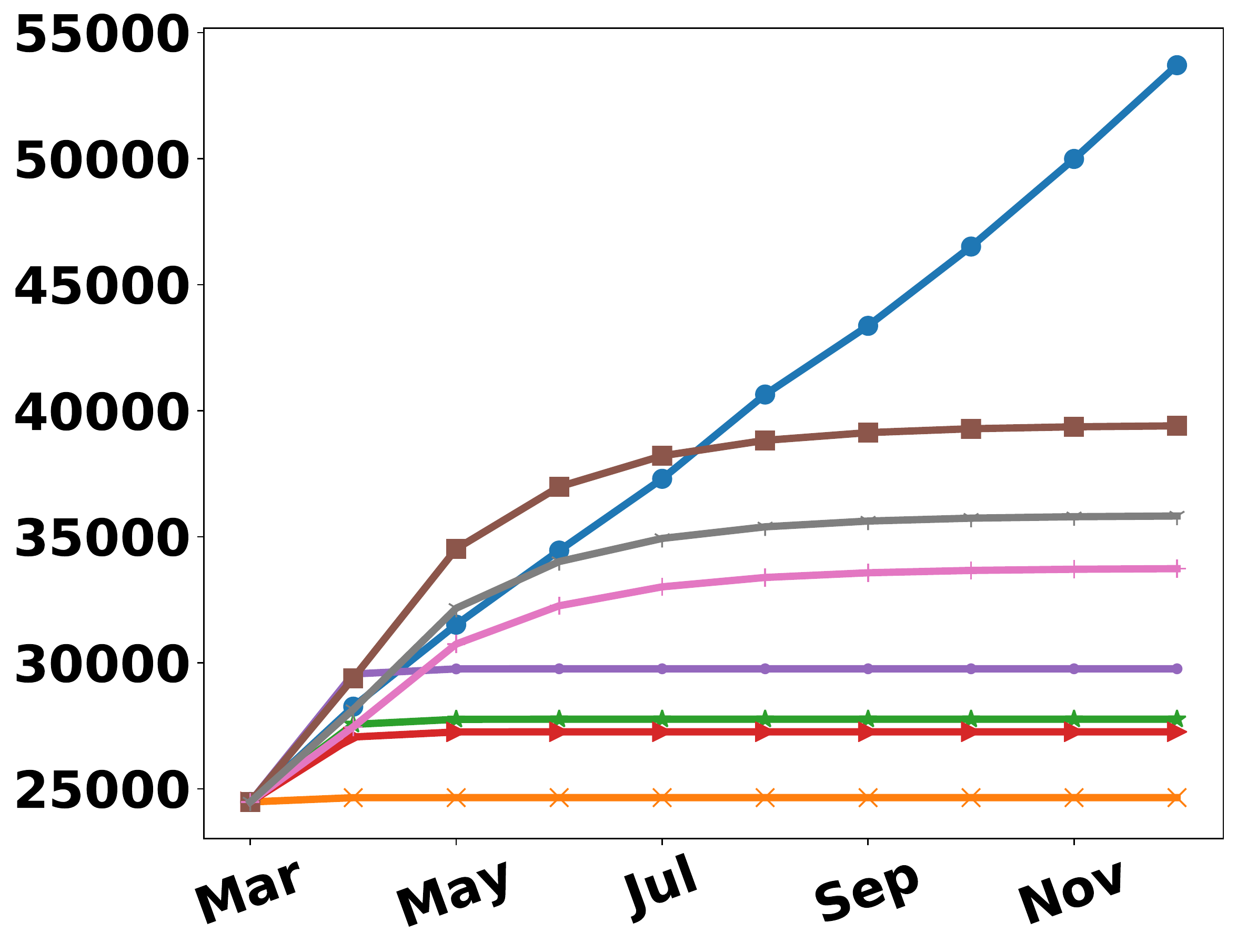}
        \caption{}
        \label{fig:cannabis_compare_2}
    \end{subfigure}
    \begin{subfigure}[t]{\subfiguresize}
        \includegraphics[width=\textwidth]{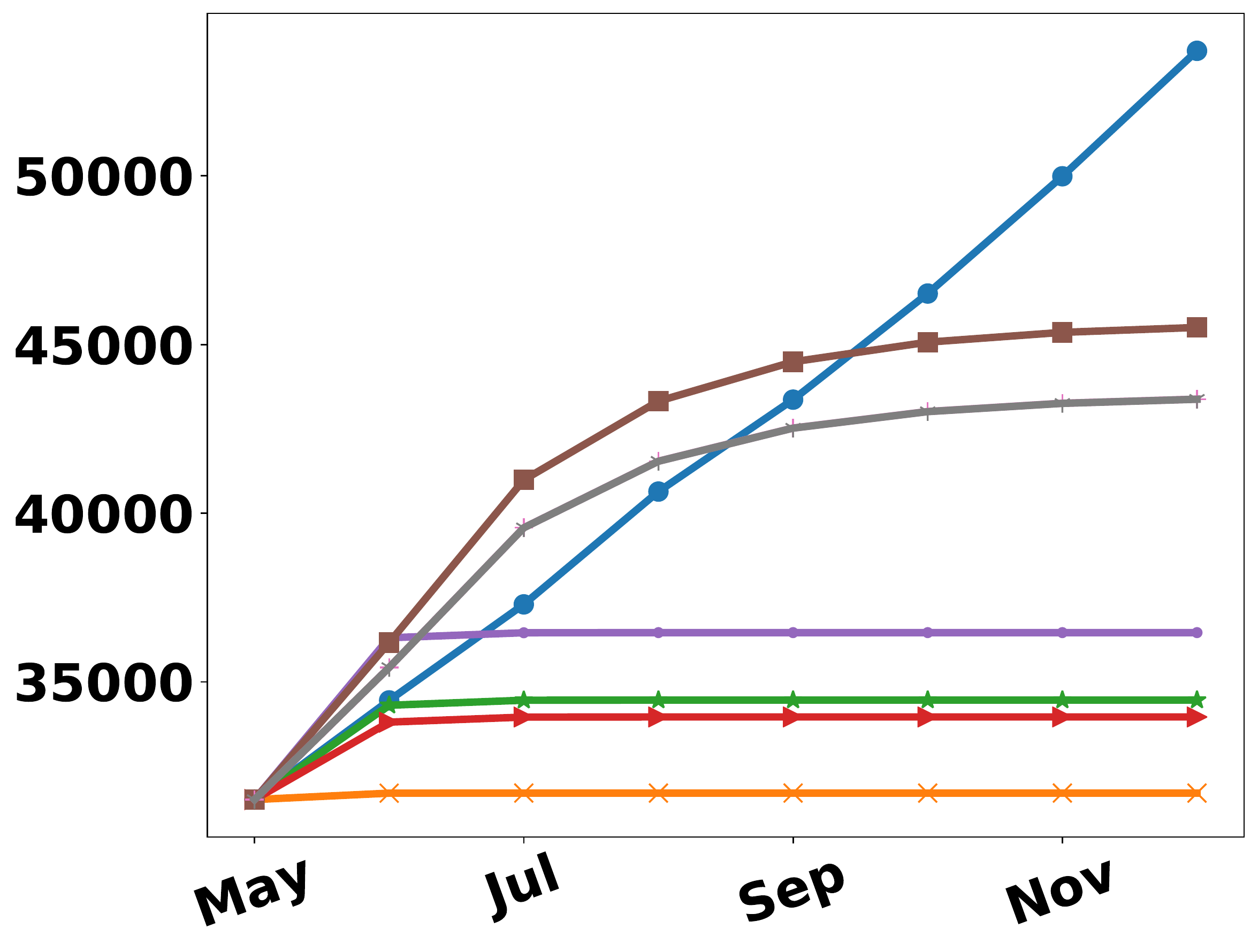}
        \caption{}
        \label{fig:cannabis_compare_4}
    \end{subfigure}
    \begin{subfigure}[t]{\subfiguresize}
        \includegraphics[width=\textwidth]{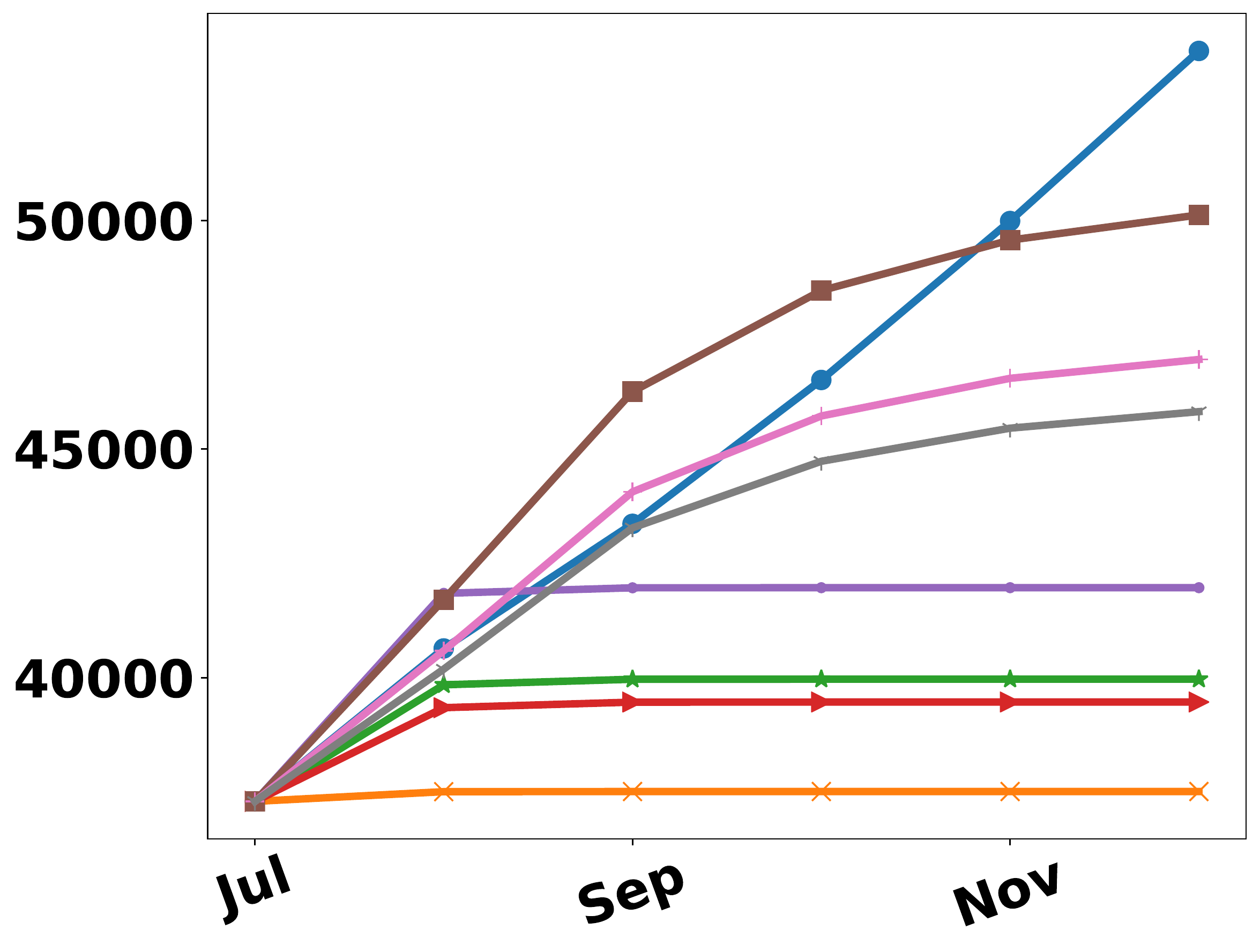}
        \caption{}
        \label{fig:cannabis_compare_6}
    \end{subfigure}
    
    \begin{subfigure}[t]{\subfigurefirst}
        \includegraphics[width=\textwidth]{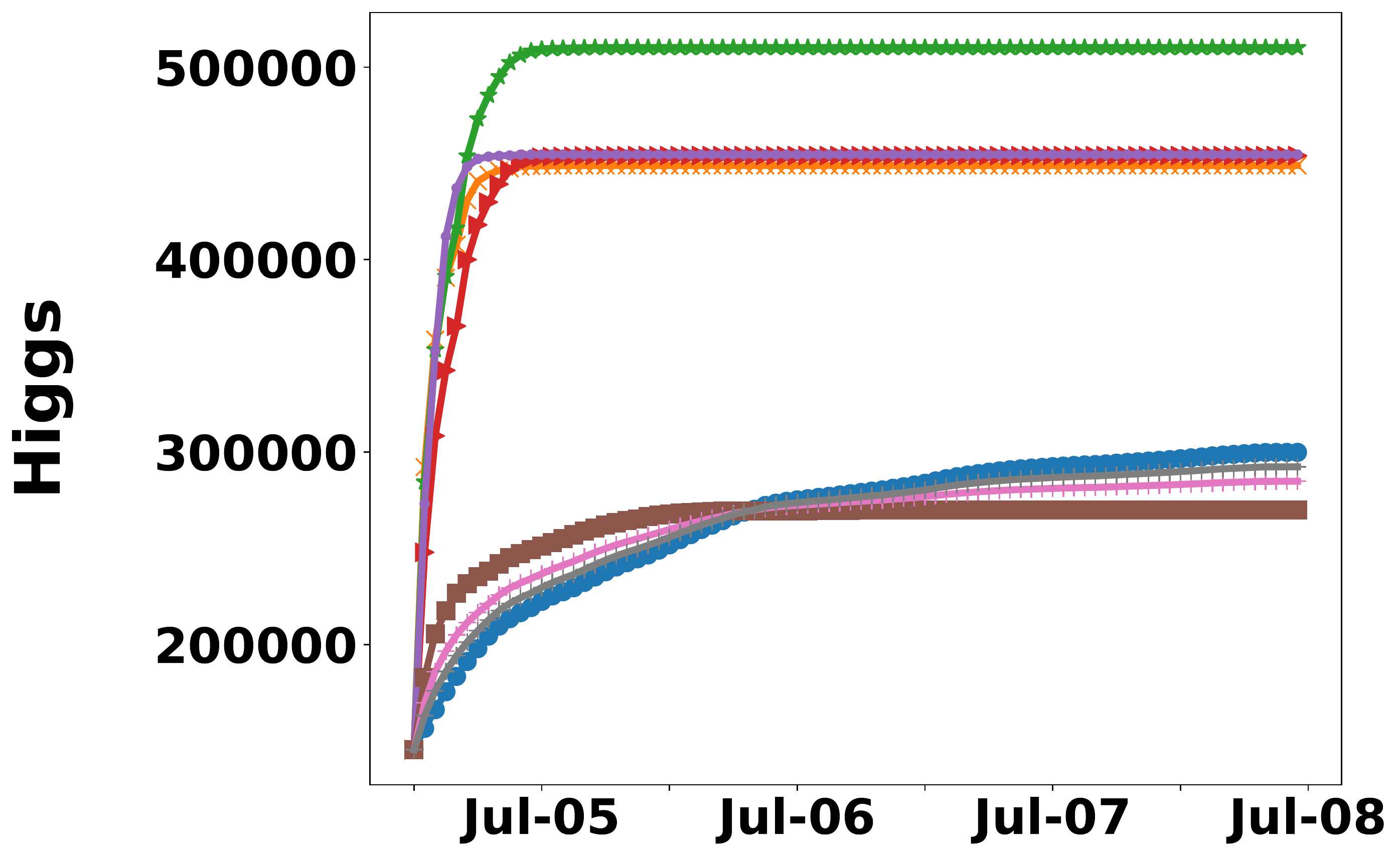}
        \caption{}
        \label{fig:higgs_compare_0}
    \end{subfigure}
    \begin{subfigure}[t]{\subfiguresize}
        \includegraphics[width=\textwidth]{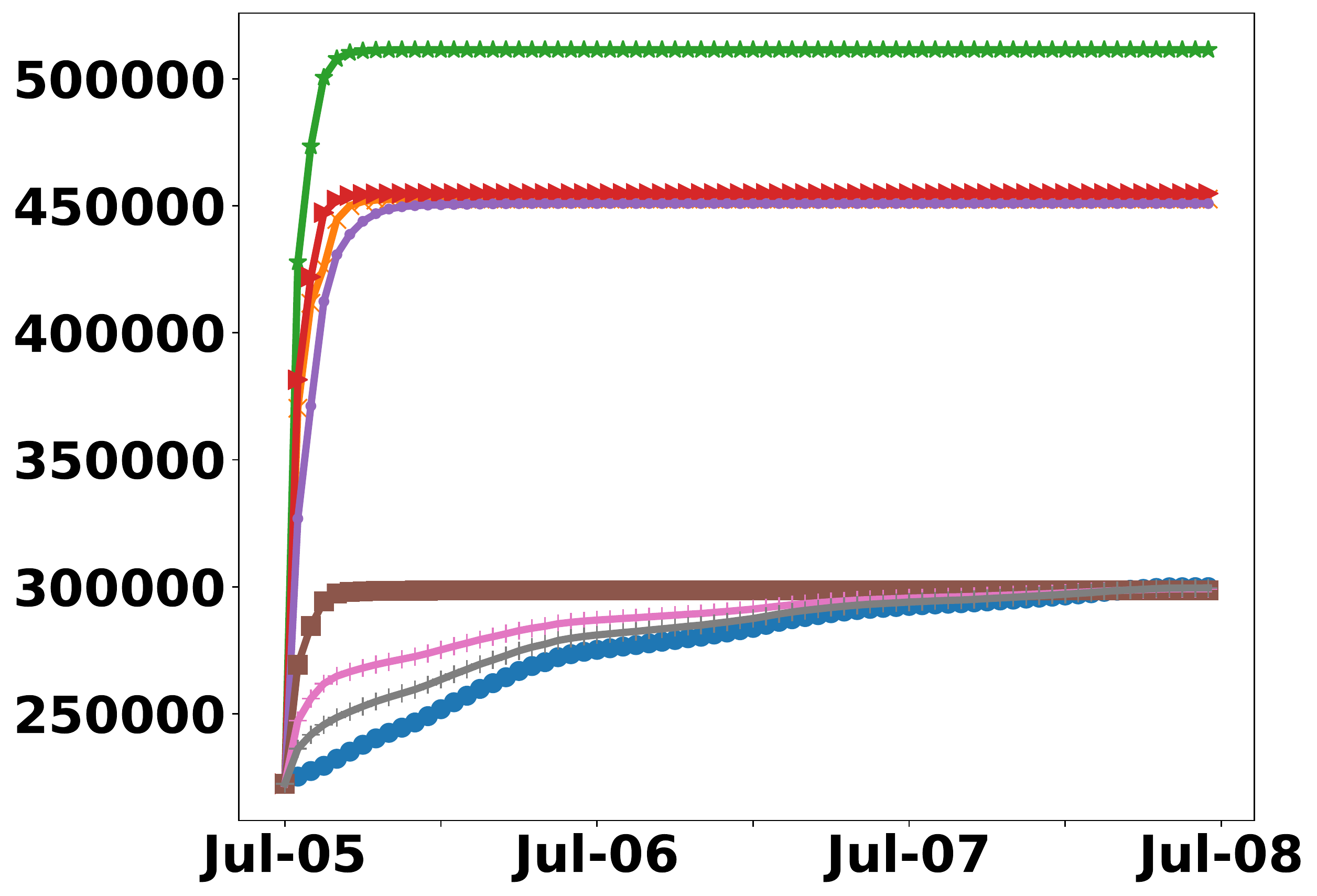}
        \caption{}
        \label{fig:higgs_compare_2}
    \end{subfigure}
    \begin{subfigure}[t]{\subfiguresize}
        \includegraphics[width=\textwidth]{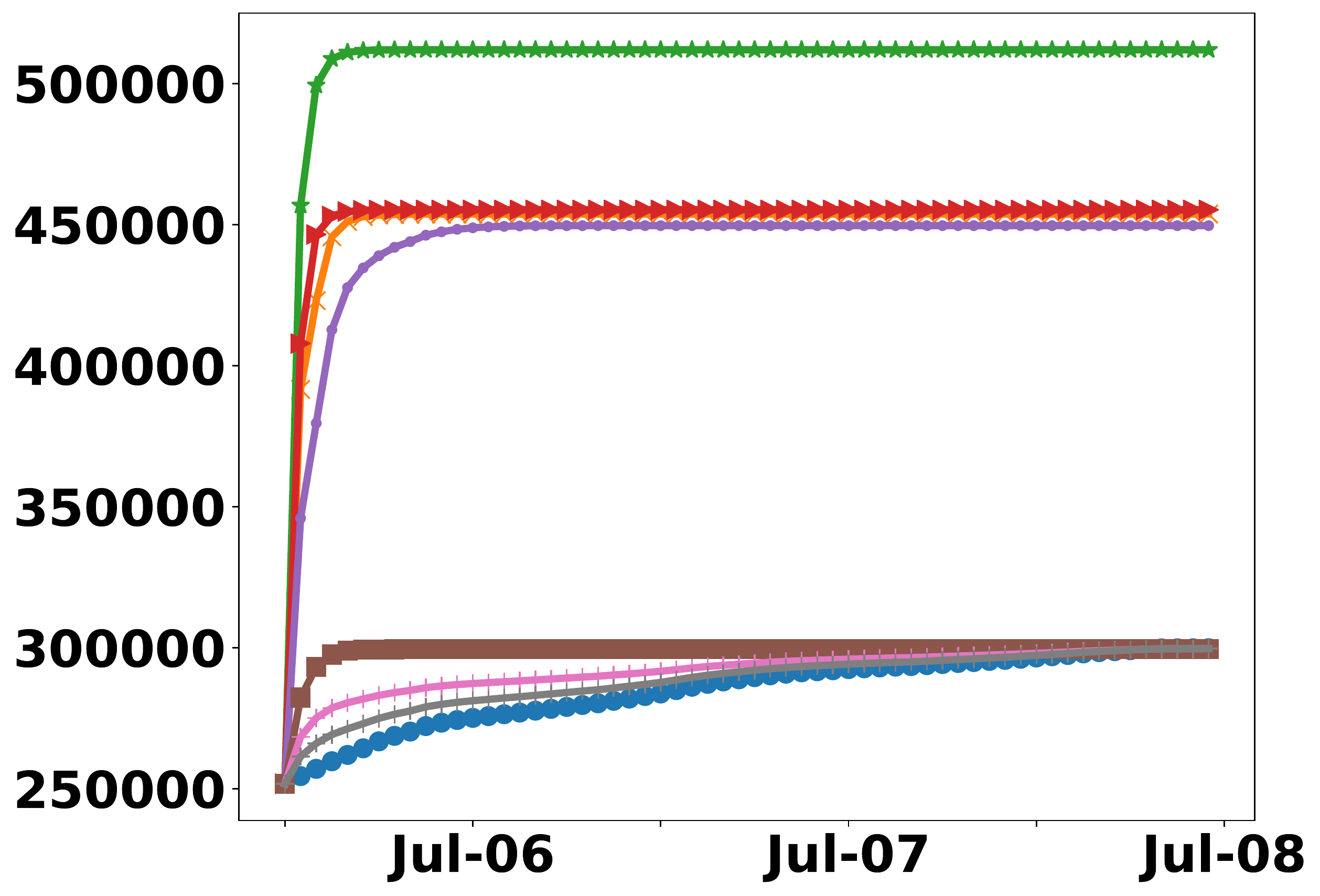}
        \caption{}
        \label{fig:higgs_compare_4}
    \end{subfigure}
    \begin{subfigure}[t]{\subfiguresize}
        \includegraphics[width=\textwidth]{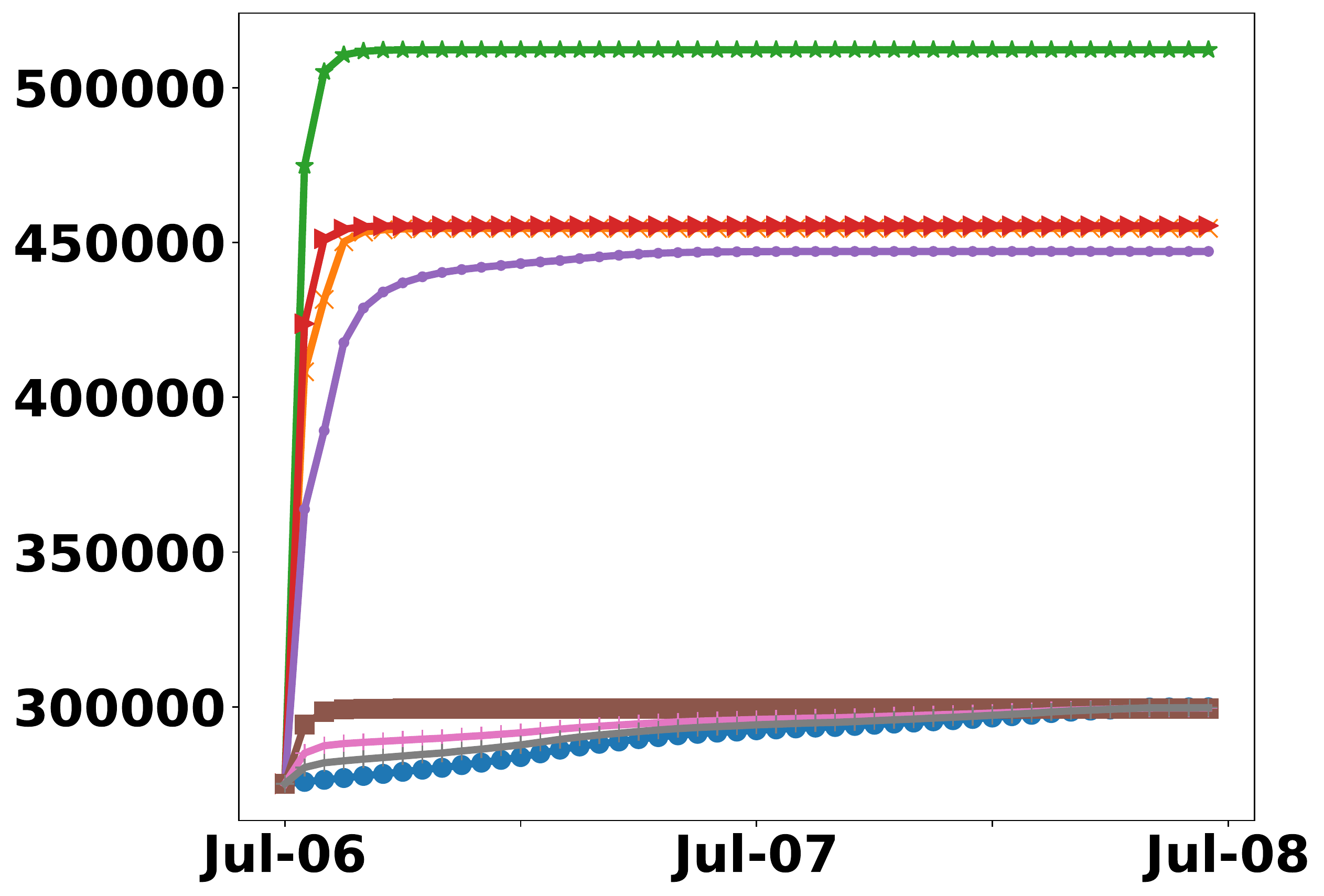}
        \caption{}
        \label{fig:higgs_compare_6}
    \end{subfigure}
    \caption{
    Comparison of diffusion size prediction on three real world datasets. 
    Our models have the closest estimation of reach over longer time periods whereas the baselines incorrectly predict diffusion saturation in the early stages. 
    }\label{fig:real_compare_appendix}
\end{figure*}
}

\subsection{Task 1 results: node threshold prediction}

Since real-world datasets do not have true node thresholds, we present results on the synthetic datasets for this task.
\noappendix{
Figure~\ref{fig:synthetic_mse_appendix} shows the MSE for each threshold estimation method varied across network generation models and for the \emph{Linear} and \emph{Quadrant} threshold generation methods.
}
\yesappendix{
Figure~\ref{fig:synthetic_mse} shows the MSE for each threshold estimation method varied across network generation models for the \emph{Linear} threshold generation method.
\emph{Quadrant} dataset results are available in the Appendix and show ST-DT performing the best overall.
}
We see that the node threshold estimators perform better across setups compared to baseline methods that do not consider node threshold prediction.
Our methods also perform noticeably better than the Linear Regression baseline in forest fire, and Watts-Strogatz.
All individualized threshold prediction models perform noticeably better than models that do not consider node threshold prediction.

From the results, ST-DT performs the best overall on the \emph{Linear} setup.
Just estimating thresholds using a regression method, like Linear Regression, does not always perform better than other baselines.
For example, Linear Regression has very high error in Watts-Strogatz networks for smaller number of neighbors.
ST-DT also performs the best overall on the nonlinear threshold in the \emph{Quadrant} setup, followed by Causal Tree.

\subsection{Task 2 results: activated node prediction}

For task 2, we show results for both synthetic and real-world datasets.
Table~\ref{tab:acc} shows the average Jaccard index for all datasets.
Overall, our models are able to achieve the best average Jaccard index across all datasets which is especially pronounced in the real-world datasets.
ST-DT achieves the highest average Jaccard index in all synthetic datasets.
In the real-world datasets, we see that our models are able to achieve significantly higher average Jaccard index compared to the baseline methods.
One potential reason for the discrepancy between the synthetic and real-world datasets is that synthetic datasets have generated features and a specific linear or quadrant threshold function.
On the other hand, our formulation and estimation using individual level features can better capture correct node thresholds for LTM for more accurate prediction in real-world scenarios.

\subsection{Task 3 results: diffusion size prediction}

\yesappendix{
We show results on diffusion size prediction on one network snapshot for real-world datasets in Figure~\ref{fig:real_compare}.
}
\noappendix{
We show results on diffusion size prediction on one network snapshot for real-world datasets in Figure~\ref{fig:real_compare_appendix}.
}
A good model should have a curve that is close to the true diffusion curve (in blue).
\yesappendix{
Other snapshots can be found in the Appendix and are consistent with the results here.
}

\textbf{Hateful Users dataset.}
\yesappendix{
Figure~\ref{fig:hateful_main} shows the diffusion size predictions for the Hateful Users dataset from Jan 2016 to Dec 2016.
}
\noappendix{
Figures~\ref{fig:hateful_compare_2016-0}-~\ref{fig:hateful_compare_2016-6} show the diffusion size predictions for the Hateful Users dataset from Jan 2016 to Dec 2016, with various snapshots (snapshots for Jan, Mar, May, Jul).
}
Our models initially overestimate reach, but are able to predict close to the final reach.
\yesappendix{
With more training data they estimate reach more accurately.
}
\noappendix{
With more data, our models estimate reach more accurately (e.g., Figure~\ref{fig:hateful_compare_2016-0} v.s.\ Figure~\ref{fig:hateful_compare_2016-6}).
}
In this dataset, Linear Regression significantly overestimates diffusion prediction in all snapshots.
\yesappendix{
In the 2017 dataset we notice the same trends as in the 2016 dataset, so we omit the plots.
}

\textbf{Cannabis dataset.}    
\yesappendix{
Figure~\ref{fig:cannabis_main} shows diffusion size predictions on the Cannabis dataset from Jan 2017 to Dec 2017.
}
\noappendix{
Figures~\ref{fig:cannabis_compare_0}-~\ref{fig:cannabis_compare_6} show diffusion size predictions on the Cannabis dataset from Jan 2017 to Dec 2017. 
}
In contrast to the Hateful Users dataset, all models predict reach that saturates after a number of time steps.
One reason for this could be the sparsity of edges.
Additionally, different parts of the network may not necessarily interact with each other, which results in disjoint subgraphs.

\textbf{Higgs dataset.}
\yesappendix{
Figure~\ref{fig:higgs_main} shows the diffusion size predictions for the Higgs dataset.
}
\noappendix{
Figures~\ref{fig:higgs_compare_0}-~\ref{fig:higgs_compare_6} show the diffusion size predictions for the Higgs dataset.
}
Specifically, we start at noon, July 4th and increase the starting snapshots by 12 hours each time.
We see that the ST-DT performs the best overall among our methods, and our methods perform significantly better than baselines.
Additionally, the baselines significantly overestimate the reach predictions.

\section{Conclusion}
\label{sec:conclusion}

In this work, we proposed a causal inference approach for estimating node thresholds in the Linear Threshold Model (LTM).
We defined a new concept of heterogeneous peer effect estimation, and developed a structural causal model for LTM to identify and estimate peer effects.
We developed a new meta-learner, the ST-Learner, and adapted trigger-based causal trees to solve the threshold estimation problem through heterogeneous peer effects.
Our results on synthetic and real-world datasets showed our models are able to estimate individualized thresholds from data better than baseline methods, and can produce more accurate sets of activated nodes and diffusion size predictions in the context of LTM, especially for real-world data.
A fruitful avenue of research would be to develop models that estimate edge influence weights, or to jointly learn edge influence weights and thresholds through a causal inference lens.

\section{Acknowledgments}
This material is based on research sponsored in part by the Defense Advanced Research Projects Agency (DAPRA) under contract numbers HR00111990114 and HR001121C0168 and the National Science Foundation under grant No. 2047899. The views and conclusions contained herein are those of the authors and should not be interpreted as necessarily representing the official policies, either expressed or implied, of DARPA or the U.S. Government. The U.S. Government is authorized to reproduce and distribute reprints for governmental purposes notwithstanding any copyright annotation therein.

\bibliography{main.bib}

\section{Appendix}

\section{Individual threshold estimation for LTM}

The implication of mapping the node threshold estimation to a trigger-based heterogeneous treatment effect estimation problem is that an accurate trigger correctly estimates the node threshold.

\begingroup
\def\thetheorem{\ref{thm:trigger_theorem}}
\begin{theorem}
    For any node $v \in \bm{V}$, let $\threshold_v$ be $v$'s true threshold. Then $\hat{\threshold}_v$ that maximizes the CAPE with a trigger in eq~\eqref{eq:trigger_cate}:
    \begin{equation}
        \argmax_{\hat{\theta}_v} E[\mathcal{Y}(\Influence^t_v \geq \hat{\threshold}_v) - \mathcal{Y}(\Influence^t_v < \hat{\threshold}_v) \mid \mathbf{X}_v, Z_v],
    \end{equation}
    provides the best estimate of the node threshold, $\threshold_v$.
\end{theorem}
\addtocounter{theorem}{-1}
\endgroup

\begin{proof}
    Suppose $|N(v)|=n$ and $w_{uv}$ be an arbitrary weight such that $\sum_{u} w_{uv} = 1$.
	Let the true threshold of node $v$ be $\theta_v$.
    Define $\mathbf{A_v} = \mathbf{a_i}$ to be an assignment of activations of neighbors of $v$ and $W_v(\alpha_i)$ to be the outcome for the activated set $\mathbf{a_i}$:
    \begin{align}
		W_v(\alpha) & = 
    	\begin{cases}
    		0 & \text{if $\Influence_v(\mathbf{a_i}) < \theta_v$}, \\
            1 & \text{if $\Influence_v(\mathbf{a_i}) \geq \theta_v$}.
		\end{cases}
    \end{align}
    Where $\Influence_v$ is the activation influence. 
    Let the set of potential outcomes, $\mathcal{Y}_v$, below and above the estimated trigger $\hat{\theta_v}$ be:
	\begin{align}
		\mathcal{Y}_v(\Influence_v < \hat{\theta}_v) & = \{ W_v(\mathbf{a_i}) \mid \Influence_v(\mathbf{a_i}) < \hat{\theta_v} \}
		\\
		\mathcal{Y}_v(\Influence_v \geq \hat{\theta}_v) & = \{ W_v(\mathbf{a_i}) \mid \Influence_v(\mathbf{a_i}) \geq \hat{\theta}_v \}.
    \end{align}
    Let $N_0$ and $N_1$ be the size of the sets $\mathcal{Y}_v(\Influence_v < \hat{\theta}_v)$ and $\mathcal{Y}_v(\Influence_v \geq \hat{\theta}_v)$, respectively.
    Define the set of outcomes when node $v$ is not activated and activated with the true threshold as $\mathcal{Z}_v(0)$ and $\mathcal{Z}_v(1)$:
    \begin{align}
        \mathcal{Z}_v(0) &= \{W_v(\mathbf{a_i}) \mid \Influence_v(\mathbf{a_i}) < \theta_v \},
        \\
        \mathcal{Z}_v(1) &= \{W_v(\mathbf{a_i}) \mid \Influence_v(\mathbf{a_i}) \geq \theta_v \}.
    \end{align}
    Let $N_{\mathcal{Z}_0}$ and $N_{\mathcal{Z}_1}$ be the cardinalities of set $\mathcal{Z}_v(0)$ and $\mathcal{Z}_v(1)$, respectively.
    The difference between $\mathcal{Y}_v$ and $\mathcal{Z}_v$ is that $\mathcal{Y}_v$ contains the true potential activations based on the \emph{estimated} threshold and $\mathcal{Z}_v$ contains the true potential activations based on the \emph{true} threshold.
    We define expectation over the sets $\mathcal{Y}_v$ and $\mathcal{Z}_v$ as the mean over the set.
    There are three cases for the estimated threshold, $\hat{\theta}_v$:
    \begin{case}[$\hat{\theta}_v < \theta_v$]
        We compute the expected outcomes below and above the estimated trigger $\hat{\theta}_v$:
        \begin{align}
            E[\mathcal{Y}_v(\Influence_v < \hat{\theta}_v)]  & = E[\{ W_v(\mathbf{a_i}) \mid \Influence_v(\mathbf{a_i}) < \hat{\theta}_v\}] \nonumber
            \\
            &= \frac{1}{N_0} \sum_{i} W_v(\mathbf{a_i}) = 0,
        \end{align}
        since there are no times when $W_v$ will be $1$ (activated) since all outcomes below the estimated threshold $\hat{\theta}_v$ are also below the true threshold $\theta_v$.
        The expected value above the estimated threshold is:
		\begin{gather}
            E[\mathcal{Y}_v(\Influence_v  \geq \hat{\theta}_v)]   = E[\{ W_v(\mathbf{a_i}) \mid \Influence_v(\mathbf{a_i}) \geq \hat{\theta}_v\}] \nonumber
            \\
            = \frac{1}{N_1} \sum_i W_v(\mathbf{a_i}) = \frac{1}{N_1} \sum_i W_v(\mathbf{a_i}) + \frac{1}{N_1} \sum_j W_v(\mathbf{a_j}) \nonumber
            \\
            = \frac{1}{N_1} \cdot N_{\mathcal{Z}_1} = \frac{N_{\mathcal{Z}_1}}{N_1}.
        \end{gather}
        The effect is the difference above and below the estimated trigger:
        \begin{align}
            E[\mathcal{Y}_v(\Influence_v \geq \hat{\theta_v}) - \mathcal{Y}_v(\Influence_v < \hat{\theta_v})] & = \frac{N_{\mathcal{Z}_1}}{N_1} - 0 = \frac{N_{\mathcal{Z}_1}}{N_1} < 1.
        \end{align}
        Note that $N_{\mathcal{Z}_1} < N_1$ since $N_1$ has cases above the estimated threshold, which is smaller than the true threshold.
    \end{case}
    \begin{case}[$\hat{\theta}_v > \theta_v$]
        \begin{align}
            E[\mathcal{Y}_v(\Influence_v < \hat{\theta}_v)]  & = E[\{ W_v(\mathbf{a_i}) \mid \Influence_v(\mathbf{a_i}) < \hat{\theta}_v\}]
            \nonumber \\
            & = \frac{1}{N_0} \sum_{i} W_v(\mathbf{a_i}) + \frac{1}{N_0} \sum_{j} W_v(\mathbf{a_i})
            \nonumber \\
            & = 0 + \frac{1}{N_0} \cdot (N_0 - N_{\mathcal{Z}_0}) \nonumber \\
            & = \frac{N_0 - N_{\mathcal{Z}_0}}{N_0}.
        \end{align}
        \begin{align}
            E[\mathcal{Y}_v(\Influence_v \geq \hat{\theta}_v)]  & = E[\{ W_v(\mathbf{a_i}) \mid \Influence_v(\mathbf{a_i}) \geq \hat{\theta}_v\}] \nonumber \\ 
            & = \frac{1}{N_1} \cdot N_1 = 1.
        \end{align}
        \begin{align}
            E[\mathcal{Y}_v(\Influence_v \geq \hat{\theta_v}) & - \mathcal{Y}_v(\Influence_v < \hat{\theta_v})] = 1 - \frac{N_0 - N_{\mathcal{Z}_0}}{N_0} < 1.
        \end{align}
    \end{case}
    \begin{case}[$\hat{\theta}_v = \theta_v$]
		\begin{align}
            E[\mathcal{Y}_v(\Influence_v < \hat{\theta}_v)]  & = E[\{ W_v(\mathbf{a_i}) \mid \Influence_v(\mathbf{a_i}) < \hat{\theta}_v\}]
            \nonumber\\
			& = \frac{1}{N_0} \sum_i W_v(\mathbf{a_i}) = 0.
			\\
			E[\mathcal{Y}_v(\Influence_v \geq \hat{\theta}_v)]  & = E[\{ W_v(\mathbf{a_i}) \mid \Influence_v(\mathbf{a_i}) \geq \hat{\theta}_v\}]
            \nonumber \\
            & = \frac{1}{N_1} \sum_i W_v(\mathbf{a_i}) = 1.
			\\
			E[\mathcal{Y}_v(\Influence_v \geq \hat{\theta_v}) & - \mathcal{Y}_v(\Influence_v < \hat{\theta_v})] = 1 - 0 = 1.
        \end{align}
        The maximum causal effect only occurs when $\hat{\theta}_v = \theta_v$.
		Therefore if we can estimate the trigger that maximizes CAPE we find the true node threshold.
		As a result, we can use any trigger-based HTE estimation method to estimate node thresholds.
		\qedhere
    \end{case}
\end{proof}

\subsection{Threshold estimation algorithms}

\subsubsection{Trigger-based causal trees}

Here we describe the Causal Tree (CT-HV) algorithm proposed in~\cite{tran-aaai2019}. Let $\mathbf{X}^{\ell}$ be the features in partition $\ell$, which represents a node in the tree which contains $N_\ell$ samples, and let $\hat{\mu}_1(\ell)$ and $\hat{\mu}_0(\ell)$ be the mean of outcomes when treated and non-treated.
The estimate of CATE of any partition is $\hat{\tau_c}(\mathbf{X}^\ell) = \hat{\mu}_1(\ell) - \hat{\mu}_0(\ell)$.
Given a partition $\ell$ that needs to be partitioned further into two children $\ell_1, \ell_2$, the causal tree algorithm finds the split on features that maximizes the weighted CATE in each child:
\begin{equation}
    \maxl_{\ell_1, \ell_2} N_{\ell_1} \cdot \hat{\tau_c}(\mathbf{X}^{\ell_1}) +  N_{\ell_2} \cdot \hat{\tau_c}(\mathbf{X}^{\ell_2}).
\end{equation}
We refer to the two quantities $N_{\ell_1} \cdot \hat{\tau_c}(\mathbf{X}^{\ell_1})$ and $N_{\ell_2} \cdot \hat{\tau_c}(\mathbf{X}^{\ell_2})$ as partition measures for $\ell_1$ and $\ell_2$.
In our work, we adapt the CT-HV algorithm~\cite{tran-aaai2019} for the problem of node threshold estimation.

In order to learn triggers, an additional search is done at each split to find the trigger that maximizes the effect estimation in each split.
Let $F(\ell)$ represent the partition measure for CT-HV and let $\hat{\theta_v}$ be some estimated trigger for partition $\ell$.
Define $M_1(\ell, \hat{\theta_v})$ and $M_0(\ell, \hat{\theta_v})$ to be the expected mean outcomes when above and below the trigger $\hat{\theta_v}$, and $F(\ell, \hat{\theta_v})$ be the partition measure with trigger $\hat{\theta_v}$: $F(\ell, \hat{\theta_v}) = M_1(\ell, \hat{\theta_v}) - M_0(\ell, \hat{\theta_v}) = E[\mathcal{Y}_\ell(\Influence \geq \hat{\theta_v}) - \mathcal{Y}_\ell(\Influence < \hat{\theta_v})].$
Then to find a trigger in partition $\ell$, we find the trigger that maximizes the partition measure: \( F(\ell, \hat{\theta_v}) \).
This results in a unique trigger in each partition.

\subsection{ST-Learner}

Pseudocode is provided for ST-Learner in Algorithm~\ref{alg:stlearner}.
Both the training and prediction subroutine are included.
In the train subroutine, we train a base learner, \( f \) to predict the outcome \( Y \) using node features \( \mathbf{X} \) and the activation influence \( I \).
To predict a trigger for a new test example, we compute and score the predicted outcome for all treatment values found in the dataset.
Then, for each potential trigger to consider, we compute an average above and below that trigger, and find the trigger that maximizes the difference in outcomes.
The additional complexity added by the search for triggers is on the order of potential treatment values $|B|$.
For example, suppose a base learner of Linear Regression requires $O(d)$ time for prediction, then ST-Learner with Linear Regression will take $O(d \cdot |B|)$ time.

\subsection{Additional results}

\subsubsection{Task 1: node threshold prediction}

Figure~\ref{fig:synthetic_mse_appendix} shows the change in MSE in the \emph{Linear} and \emph{Quadrant} setup.
From the results, we see that Causal Tree in general performs better on the nonlinear threshold in the \emph{Quadrant} dataset, compared to the linearly generated threshold.

\subsection{Task 3: diffusion size prediction}

Figure~\ref{fig:real_compare_appendix} shows additional figures for the real-world datasets. Each figure starts at a different network snapshot, where a snapshot is the current structure and activations at time \( t \).
For example, Figure~\ref{fig:hateful_compare_2016-2} learns on data starting at the snapshot at the end of March 2016.
A later snapshot means there are more activations in the dataset, and thus more training data.
As the training data size increases, our methods can achieve better diffusion size prediction.

\textbf{Hateful Users dataset.}
Figures~\ref{fig:hateful_compare_2016-0}-\ref{fig:hateful_compare_2016-6} show the diffusion predictions for the Hateful Users dataset from Jan 2016 to Dec 2016, with varying starting points (Jan, Mar, May, Jul).
We show diffusion simulations at four snapshots before Dec 2016.
Our models initially overestimate the diffusion size prediction
The first thing we notice is that our models initially overestimates the reach the diffusion prediction, but is able to predict close to the final diffusion amount.
Additionally, we see that as we get more time steps, our models obtains more accurate reach estimates.
For example, in Figures~\ref{fig:hateful_compare_2016-0} and~\ref{fig:hateful_compare_2016-2}, where we start with information in January and March, our models overestimates the prediction in the beginning. 
With more information, the reach estimates are better, as in Figures~\ref{fig:hateful_compare_2016-4} (starting in May), and~\ref{fig:hateful_compare_2016-6} (starting in July).
In the 2017 dataset, we notice the same trends as in the 2016 dataset: our models overestimates the reach prediction in the beginning, but predicts close to the final true amount, so we omit the plots.

\textbf{Cannabis dataset.}
Figures~\ref{fig:cannabis_compare_0}-\ref{fig:cannabis_compare_6} show results for threshold estimation on the Cannabis dataset from Jan 2017 to Dec 2017.
Contrast to the Hateful Users dataset, all models predict diffusion that saturate after a number of time steps.
Our models predict saturation closer to the final diffusion amount with more training data.

\textbf{Higgs dataset.}
Figure~\ref{fig:higgs_compare_0} shows the diffusion size predictions for the Higgs dataset.
Specifically, we start at noon, July 4th and increase the starting snapshots by 12 hours each time.
We see that the ST-Learner performs slightly better than Causal Tree and both outperform the baselines by a significant margin.
Additionally, the baselines significantly overestimate the reach predictions.

\end{document}